  \ifdef{\Cref}{%
    \autonum@generatePatchedReferenceCSL{Cref}%
  }{}%
\crefname{apx}{appendix}{appendices}
\Crefname{apx}{Appendix}{Appendices}
\newtheorem*{conjecture*}{Conjecture}
\newtheorem*{definition*}{Definition}
\newtheorem*{proposition*}{Proposition}
\newtheorem*{corollary*}{Corollary}
  \newtheorem*{lemma*}{Lemma}
  \newtheorem*{theorem*}{Theorem}
  \newtheorem*{example*}{Example}
  \newcommand{\mynewtheorem}[2]{
    \ifdef{\seperatenumbering}{
      \newtheorem{#1}{#2}
    }{
      \newtheorem{#1}[theorem]{#2}
    }
  }
    \ifdef{\numbertheoremwithin}{
      \newtheorem{theorem}{Theorem}[\numbertheoremwithin]
    }{
      \newtheorem{theorem}{Theorem}
    }
    \newtheorem*{theorem*}{Theorem}
    \theoremstyle{remark}
    \newtheorem{remark}{Remark}
    \newtheorem*{remark*}{Remark}
    \newtheorem*{example*}{Example}
\crefname{conjecture}{Conjecture}{Conjectures}
\Crefname{conjecture}{Conjecture}{Conjectures}
\crefname{enumi}{part}{parts}
\Crefname{enumi}{Part}{Parts}
\crefname{equation}{Equation}{Equations}
\Crefname{equation}{Equation}{Equations}
\crefname{figure}{Figure}{Figures}
\Crefname{figure}{Figure}{Figures}
\crefname{page}{page}{page}
\Crefname{page}{Page}{Page}
  \crefname{figure}{Figure}{Figures}
  \Crefname{figure}{Figure}{Figures}
\crefname{assumption}{assumption}{assumptions}
\Crefname{assumption}{Assumption}{Assumptions}
\crefname{case}{case}{cases}
\Crefname{case}{Case}{Cases}
\let\originalleft\left
\let\originalright\right
\let\originalmiddle\middle
\renewcommand{\left}{\mathopen{}\mathclose\bgroup\originalleft}
\renewcommand{\right}{\aftergroup\egroup\originalright}
\renewcommand{\middle}{\originalmiddle}
  \newcommand{\todo}[1][\empty]{%
    \textbf{\textcolor{red}{\ifthenelse{\equal{#1}{\empty}}{TODO}{TODO:~\emph{#1}}}}%
  }%
  \newcounter{todo}
  \newcommand{\todo}[1][\empty]{%
    \bookmarksetupnext{keeplevel}    \subpdfbookmark{TODO}{todo.\arabic{todo}}%
    \bookmarksetup{keeplevel=false}%
    \refstepcounter{todo}%
    \textcolor{red}{\textbf{TODO}%
      \ifthenelse{\equal{#1}{\empty}}{}%
      {\footnote{\textcolor{red}{\emph{#1}}}}%
    }%
    \xspace%
  }
  \newcommand\changemode[1]{%
    \gdef\beamer@currentmode{#1}}
\newcounter{align}
\let\expandafter\oldalign\csname align\endcsname
\def\csname align\endcsname{\refstepcounter{align}\oldalign}
\newcommand{\ie}{i.\,e.\@\xspace}
\newcommand{\eg}{e.\,g.\@\xspace}
\newcommand{\cf}{cf.\@\xspace}
\newcommand{\iid}{i.i.d.\@\xspace}
\newcommand{\wrt}{w.r.t.\@\xspace}
\newcommand{\eqnl}{\nonumber\\*}
\DeclareDocumentCommand{\card}{ O{\empty} m}{
  \ensuremath{
    \ifthenelse{\equal{#1}{big}}{\big\vert #2 \big\vert}{
      \ifthenelse{\equal{#1}{Big}}{\Big\vert #2 \Big\vert}{
        \ifthenelse{\equal{#1}{bigg}}{\bigg\vert #2 \bigg\vert}{
          \ifthenelse{\equal{#1}{Bigg}}{\Bigg\vert #2 \Bigg\vert}{
            \ifthenelse{\equal{#1}{normal}}{\vert #2 \vert}{\left\vert #2 \right\vert}
          }
        }
      }
    }
  }
}
\DeclareDocumentCommand{\Vol}{ O{\empty} m}{
  \ensuremath{
    \ensuremath{\mathrm{Vol}
      \ifthenelse{\equal{#1}{big}}{\big(#2\big)}{
        \ifthenelse{\equal{#1}{Big}}{\Big(#2\Big)}{
          \ifthenelse{\equal{#1}{bigg}}{\bigg(#2\bigg)}{
            \ifthenelse{\equal{#1}{Bigg}}{\Bigg(#2\Bigg)}{
              \ifthenelse{\equal{#1}{normal}}{(#2)}{\left(#2\right)}
            }
          }
        }
      }
    }
  }
}
\DeclareDocumentCommand{\cvx}{ O{\empty} m}{
  \ensuremath{
    \ensuremath{\mathrm{conv}
      \ifthenelse{\equal{#1}{big}}{\big(#2\big)}{
        \ifthenelse{\equal{#1}{Big}}{\Big(#2\Big)}{
          \ifthenelse{\equal{#1}{bigg}}{\bigg(#2\bigg)}{
            \ifthenelse{\equal{#1}{Bigg}}{\Bigg(#2\Bigg)}{
              \ifthenelse{\equal{#1}{normal}}{(#2)}{\left(#2\right)}
            }
          }
        }
      }
    }
  }
}
\newcommand{\norm}[1]{\ensuremath{\left\Vert{#1}\right\Vert}}
\newcommand{\ol}[1]{\ensuremath{\overline{#1}}}
\DeclareDocumentCommand{\Ntoo}{ O{\empty} m}{
  \ensuremath{
    \ifthenelse{\equal{#1}{big}}{\big[#2\big]}{
      \ifthenelse{\equal{#1}{Big}}{\Big[#2\Big]}{
        \ifthenelse{\equal{#1}{bigg}}{\bigg[#2\bigg]}{
          \ifthenelse{\equal{#1}{Bigg}}{\Bigg[#2\Bigg]}{
            \ifthenelse{\equal{#1}{normal}}{[#2]}{\left[#2\right]}
          }
        }
      }
    }
  }
}
\DeclareDocumentCommand{\NZtoo}{ O{\empty} m}{
  \ensuremath{
    \ifthenelse{\equal{#1}{big}}{\big[0\,{:}\,#2\big]}{
      \ifthenelse{\equal{#1}{Big}}{\Big[0\,{:}\,#2\Big]}{
        \ifthenelse{\equal{#1}{bigg}}{\bigg[0\,{:}\,#2\bigg]}{
          \ifthenelse{\equal{#1}{Bigg}}{\Bigg[0\,{:}\,#2\Bigg]}{
            \ifthenelse{\equal{#1}{normal}}{[0\,{:}\,#2]}{\left[0\,{:}\,#2\right]}
          }
        }
      }
    }
  }
}
\newcommand{\Nto}[1]{\ensuremath{\{1,2,\dots,#1\}}}
\newcommand{\defas}{\vcentcolon=}
\newcommand{\ee}[1]{\ensuremath{\mathrm{e}^{#1}}}
\newcommand{\wt}[1]{\ensuremath{\widetilde{#1}}}
\newcommand{\compl}{\ensuremath{^{\mathrm{c}}}}
\newcommand{\NN}{\ensuremath{\mathbb{N}}}
\newcommand{\RR}{\ensuremath{\mathbb{R}}}
\newcommand{\dd}{\ensuremath{\mathrm{d}}}
\newcommand{\vt}[1]{\ensuremath{\boldsymbol{#1}}}
\newcommand{\wc}{\ensuremath{{}\cdot{}}}
\DeclareDocumentCommand{\ind}{m m}{%
  \ensuremath{%
    \mathds{1}_{#1}%
    \ifthenelse{\equal{#2}{\empty}}{}{(#2)}%
  }%
}
\DeclareDocumentCommand{\Exp}{ O{\empty} O{\empty} m}{
  \ensuremath{\mathds{E}_{#1}
    \ifthenelse{\equal{#2}{big}}{\big[ #3 \big]}{
      \ifthenelse{\equal{#2}{Big}}{\Big[ #3 \Big]}{
        \ifthenelse{\equal{#2}{bigg}}{\bigg[ #3 \bigg]}{
          \ifthenelse{\equal{#2}{Bigg}}{\Bigg[ #3 \Bigg]}{
            \ifthenelse{\equal{#2}{normal}}{[ #3 ]}{\left[ #3 \right]}
          }
        }
      }
    }
  }
}
\DeclareDocumentCommand{\condExp}{ O{\empty} m m}{
  \ensuremath{\mathds{E}
    \ifthenelse{\equal{#1}{big}}{\big[{#2\big| #3}\big]}{
      \ifthenelse{\equal{#1}{Big}}{\Big[{#2\Big| #3}\Big]}{
        \ifthenelse{\equal{#1}{bigg}}{\bigg[{#2\bigg| #3}\bigg]}{
          \ifthenelse{\equal{#1}{Bigg}}{\Bigg[{#2\Bigg| #3}\Bigg]}{
            \ifthenelse{\equal{#1}{normal}}{[{#2| #3}]}{\left[#2 \middle| #3\right]}
          }
        }
      }
    }
  }
}
\newcommand{\p}[2][\empty]{\ensuremath{\mathrm{p}_{#1}\ifthenelse{\equal{#2}{}}{}{\left({#2}\right)}}}
\newcommand{\op}[2][\empty]{\ensuremath{\ol{\mathrm{p}}_{#1}\ifthenelse{\equal{#2}{}}{}{\left({#2}\right)}}}
\newcommand{\q}[2][\empty]{\ensuremath{\mathrm{q}_{#1}\ifthenelse{\equal{#2}{}}{}{\left({#2}\right)}}}
\newcommand{\wtp}[2][\empty]{\ensuremath{\wt{\mathrm{p}}_{#1}\ifthenelse{\equal{#2}{}}{}{\left({#2}\right)}}}
\newcommand{\htp}[2][\empty]{\ensuremath{\hat{\mathrm{p}}_{#1}\ifthenelse{\equal{#2}{}}{}{\left({#2}\right)}}}
\newcommand{\pp}[2][\empty]{\ensuremath{\mathrm{p}'_{#1}\ifthenelse{\equal{#2}{}}{}{\left({#2}\right)}}}
\newcommand{\qq}[2][\empty]{\ensuremath{\mathrm{q}'_{#1}\ifthenelse{\equal{#2}{}}{}{\left({#2}\right)}}}
\newcommand{\ppp}[2][\empty]{\ensuremath{\mathrm{p}''_{#1}\ifthenelse{\equal{#2}{}}{}{\left({#2}\right)}}}
\newcommand{\pdf}[2][\empty]{\ensuremath{\mathrm{f}_{#1}\ifthenelse{\equal{#2}{}}{}{\left({#2}\right)}}}
\DeclareDocumentCommand{\Prob}{ O{\empty} O{\empty} m}{
  \ensuremath{\mathrm{P}\ifthenelse{\equal{#1}{\empty}}{}{_{#1}}
    \ifthenelse{\equal{#2}{big}}{\big\{#3\big\}}{
      \ifthenelse{\equal{#2}{Big}}{\Big\{#3\Big\}}{
        \ifthenelse{\equal{#2}{bigg}}{\bigg\{#3\bigg\}}{
          \ifthenelse{\equal{#2}{Bigg}}{\Bigg\{#3\Bigg\}}{
            \ifthenelse{\equal{#2}{normal}}{\{#3\}}{\left\{#3\right\}}
          }
        }
      }
    }
  }
}
\DeclareDocumentCommand{\DKL}{ O{\empty} m m}{
  \ensuremath{\mathrm{D}
    \ifthenelse{\equal{#1}{big}}{\big(#2 \big\Vert #3\big)}{
      \ifthenelse{\equal{#1}{Big}}{\Big(#2 \Big\Vert #3\Big)}{
        \ifthenelse{\equal{#1}{bigg}}{\bigg(#2 \bigg\Vert #3\bigg)}{
          \ifthenelse{\equal{#1}{Bigg}}{\Bigg(#2 \Bigg\Vert #3\Bigg)}{
            \ifthenelse{\equal{#1}{normal}}{(#2 \Vert #3)}{\left(#2 \middle\Vert #3\right)}
          }
        }
      }
    }
  }
}
\DeclareDocumentCommand{\pcond}{ O{\empty} O{\empty} O{\empty} m m}{
  \ensuremath{\mathrm{p}
    \ifthenelse{\equal{#1#2}{\empty}}{}{_{#1|#2}}
    \ifthenelse{\equal{#4#5}{\empty}}{}{
      \ifthenelse{\equal{#3}{big}}{\big(#4 \big| #5\big)}{
        \ifthenelse{\equal{#3}{Big}}{\Big(#4 \Big| #5\Big)}{
          \ifthenelse{\equal{#3}{bigg}}{\bigg(#4 \bigg| #5\bigg)}{
            \ifthenelse{\equal{#3}{Bigg}}{\Bigg(#4 \Bigg| #5\Bigg)}{
              \ifthenelse{\equal{#3}{normal}}{(#4 | #5)}{\left(#4 \middle| #5\right)}
            }
          }
        }
      }
    }
  }
}
\DeclareDocumentCommand{\Pcond}{ O{\empty} O{\empty} O{\empty} m m}{
  \ensuremath{\mathrm{P}
    \ifthenelse{\equal{#1#2}{\empty}}{}{_{#1|#2}}
    \ifthenelse{\equal{#4#5}{\empty}}{}{
      \ifthenelse{\equal{#3}{big}}{\big\{#4 \big| #5\big\}}{
        \ifthenelse{\equal{#3}{Big}}{\Big\{#4 \Big| #5\Big\}}{
          \ifthenelse{\equal{#3}{bigg}}{\bigg\{#4 \bigg| #5\bigg\}}{
            \ifthenelse{\equal{#3}{Bigg}}{\Bigg\{#4 \Bigg| #5\Bigg\}}{
              \ifthenelse{\equal{#3}{normal}}{\{#4 | #5\}}{\left\{#4 \middle| #5\right\}}
            }
          }
        }
      }
    }
  }
}
\newcommand{\rv}[1]{\ensuremath{\mathsf{\uppercase{#1}}}}
\newcommand{\rvt}[1]{\vt{\rv{#1}}} 
 \newcommand{\wrvt}[1]{\wt{\rvt{#1}}} 
\newcommand{\hrv}[1]{\hat{\rv{#1}}}  
\DeclareDocumentCommand{\mutInf}{ O{\empty} m m}{
  \ensuremath{\mathrm{I}
    \ifthenelse{\equal{#1}{big}}{\big({#2; #3}\big)}{
      \ifthenelse{\equal{#1}{Big}}{\Big({#2; #3}\Big)}{
        \ifthenelse{\equal{#1}{bigg}}{\bigg({#2; #3}\bigg)}{
          \ifthenelse{\equal{#1}{Bigg}}{\Bigg({#2; #3}\Bigg)}{
            \ifthenelse{\equal{#1}{normal}}{({#2; #3})}{\left({#2; #3}\right)}
          }
        }
      }
    }
  }
}
\DeclareDocumentCommand{\condMutInf}{ O{\empty} m m m }{
  \ensuremath{\mathrm{I}
    \ifthenelse{\equal{#1}{big}}{\big(#2; #3 \big| #4\big)}{
      \ifthenelse{\equal{#1}{Big}}{\Big(#2; #3 \Big| #4\Big)}{
        \ifthenelse{\equal{#1}{bigg}}{\bigg(#2; #3 \bigg| #4\bigg)}{
          \ifthenelse{\equal{#1}{Bigg}}{\Bigg(#2; #3 \Bigg| #4\Bigg)}{
            \ifthenelse{\equal{#1}{normal}}{(#2; #3 | #4)}{\left(#2; #3 \middle| #4\right)}
          }
        }
      }
    }
  }
}
\DeclareDocumentCommand{\wtyp}{ O{\empty} O{\empty} m}{
  \ensuremath{\mathcal{A}^{\ifthenelse{\equal{#1}{}}{}{(#1)}}_{#2}(#3)}
}
\DeclareDocumentCommand{\condWtyp}{ O{\empty} O{\empty} m m m}{
  \ensuremath{\mathcal{A}^{\ifthenelse{\equal{#1}{}}{}{(#1)}}_{#2}{(#3 | #4 = #5)}}
}
\DeclareDocumentCommand{\typ}{ O{\empty} O{\empty} m}{
  \ensuremath{\mathcal{T}^{#1}_{[#3]#2}}
}
\DeclareDocumentCommand{\type}{ O{\empty} m}{
  \ensuremath{\mathcal{T}^{#1}_{#2}}
}
\DeclareDocumentCommand{\condTyp}{ O{\empty} O{\empty} m m m}{
  \ensuremath{\mathcal{T}^{#1}_{[#3|#4]#2}(#5)}
}
\DeclareDocumentCommand{\condType}{ O{\empty} m m m}{
  \ensuremath{\mathcal{T}^{#1}_{#2|#3}(#4)}
}
\DeclareDocumentCommand{\ent}{ O{\empty} m}{
  \ensuremath{\mathrm{H}
    \ifthenelse{\equal{#1}{big}}{\big(#2\big)}{
      \ifthenelse{\equal{#1}{Big}}{\Big(#2\Big)}{
        \ifthenelse{\equal{#1}{bigg}}{\bigg(#2\bigg)}{
          \ifthenelse{\equal{#1}{Bigg}}{\Bigg(#2\Bigg)}{
            \ifthenelse{\equal{#1}{normal}}{(#2)}{\left(#2\right)}
          }
        }
      }
    }
  }
}
\DeclareDocumentCommand{\entRate}{ O{\empty} m}{
  \ensuremath{\ol{\mathrm{H}}
    \ifthenelse{\equal{#1}{big}}{\big(#2\big)}{
      \ifthenelse{\equal{#1}{Big}}{\Big(#2\Big)}{
        \ifthenelse{\equal{#1}{bigg}}{\bigg(#2\bigg)}{
          \ifthenelse{\equal{#1}{Bigg}}{\Bigg(#2\Bigg)}{
            \ifthenelse{\equal{#1}{normal}}{(#2)}{\left(#2\right)}
          }
        }
      }
    }
  }
}
\DeclareDocumentCommand{\entPhi}{ O{\empty} m}{
  \ensuremath{\mathrm{H}_{\phi}{}%
    \ifthenelse{\equal{#1}{big}}{\big(#2\big)}{
      \ifthenelse{\equal{#1}{Big}}{\Big(#2\Big)}{
        \ifthenelse{\equal{#1}{bigg}}{\bigg(#2\bigg)}{
          \ifthenelse{\equal{#1}{Bigg}}{\Bigg(#2\Bigg)}{
            \ifthenelse{\equal{#1}{normal}}{(#2)}{\left(#2\right)}
          }
        }
      }
    }
  }
}
\DeclareDocumentCommand{\condEnt}{ O{\empty} m m}{
  \ensuremath{\mathrm{H}
    \ifthenelse{\equal{#1}{big}}{\big({#2\big| #3}\big)}{
      \ifthenelse{\equal{#1}{Big}}{\Big({#2\Big| #3}\Big)}{
        \ifthenelse{\equal{#1}{bigg}}{\bigg({#2\bigg| #3}\bigg)}{
          \ifthenelse{\equal{#1}{Bigg}}{\Bigg({#2\Bigg| #3}\Bigg)}{
            \ifthenelse{\equal{#1}{normal}}{({#2| #3})}{\left(#2 \middle| #3\right)}
          }
        }
      }
    }
  }
}
\DeclareDocumentCommand{\dent}{ O{\empty} m}{
  \ensuremath{\mathrm{h}
    \ifthenelse{\equal{#1}{big}}{\big(#2\big)}{
      \ifthenelse{\equal{#1}{Big}}{\Big(#2\Big)}{
        \ifthenelse{\equal{#1}{bigg}}{\bigg(#2\bigg)}{
          \ifthenelse{\equal{#1}{Bigg}}{\Bigg(#2\Bigg)}{
            \ifthenelse{\equal{#1}{normal}}{(#2)}{\left(#2\right)}
          }
        }
      }
    }
  }
}
\DeclareDocumentCommand{\dentRate}{ O{\empty} m}{
  \ensuremath{\ol{\mathrm{h}}
    \ifthenelse{\equal{#1}{big}}{\big(#2\big)}{
      \ifthenelse{\equal{#1}{Big}}{\Big(#2\Big)}{
        \ifthenelse{\equal{#1}{bigg}}{\bigg(#2\bigg)}{
          \ifthenelse{\equal{#1}{Bigg}}{\Bigg(#2\Bigg)}{
            \ifthenelse{\equal{#1}{normal}}{(#2)}{\left(#2\right)}
          }
        }
      }
    }
  }
}
\DeclareDocumentCommand{\condDent}{ O{\empty} m m}{
  \ensuremath{\mathrm{h}
    \ifthenelse{\equal{#1}{big}}{\big({#2\big| #3}\big)}{
      \ifthenelse{\equal{#1}{Big}}{\Big({#2\Big| #3}\Big)}{
        \ifthenelse{\equal{#1}{bigg}}{\bigg({#2\bigg| #3}\bigg)}{
          \ifthenelse{\equal{#1}{Bigg}}{\Bigg({#2\Bigg| #3}\Bigg)}{
            \ifthenelse{\equal{#1}{normal}}{({#2| #3})}{\left(#2 \middle| #3\right)}
          }
        }
      }
    }
  }
}
\newcommand{\binEntOp}{\mathrm{H}_2}
\DeclareDocumentCommand{\binEnt}{ O{\empty} m}{
  \ensuremath{\binEntOp{
      \ifthenelse{\equal{#2}{\empty}}{}{
        \ifthenelse{\equal{#1}{big}}{\big({#2}\big)}{
          \ifthenelse{\equal{#1}{Big}}{\Big({#2}\Big)}{
            \ifthenelse{\equal{#1}{bigg}}{\bigg({#2}\bigg)}{
              \ifthenelse{\equal{#1}{Bigg}}{\Bigg({#2}\Bigg)}{
                \ifthenelse{\equal{#1}{normal}}{({#2})}{\left(#2\right)}
              }
            }
          }
        }
      }
    }
  }
}
\DeclareDocumentCommand{\binEntInv}{ O{\empty} m}{
  \ensuremath{\binEntOp^{-1}{
      \ifthenelse{\equal{#2}{\empty}}{}{
        \ifthenelse{\equal{#1}{big}}{\big({#2}\big)}{
          \ifthenelse{\equal{#1}{Big}}{\Big({#2}\Big)}{
            \ifthenelse{\equal{#1}{bigg}}{\bigg({#2}\bigg)}{
              \ifthenelse{\equal{#1}{Bigg}}{\Bigg({#2}\Bigg)}{
                \ifthenelse{\equal{#1}{normal}}{({#2})}{\left(#2\right)}
              }
            }
          }
        }
      }
    }
  }
}
\def\barcirci{{%
    \setbox0\hbox{\ensuremath{\relbar\!\!\relbar}}%
    \rlap{\hbox to \wd0{\hss\ensuremath{\circ}\hss}}\box0
}}
\newcommand{\gauss}[2]{\ensuremath{{\mathcal{N}(#1,#2)}}}
\newcommand{\bernoulli}[1]{\ensuremath{{\BBB(#1)}}}
\DeclareDocumentCommand{\uniform}{ O{\empty} m}{
  \ensuremath{\mathcal{U}
    \ifthenelse{\equal{#1}{big}}{\big({#2}\big)}{
      \ifthenelse{\equal{#1}{Big}}{\Big({#2}\Big)}{
        \ifthenelse{\equal{#1}{bigg}}{\bigg({#2}\bigg)}{
          \ifthenelse{\equal{#1}{Bigg}}{\Bigg({#2}\Bigg)}{
            \ifthenelse{\equal{#1}{normal}}{({#2})}{\left(#2\right)}
          }
        }
      }
    }
  }
}
\newcommand{\orvt}[1]{\ensuremath{\ol{\rvt{#1}}}}
\newcommand{\set}[1]{\ensuremath{\mathcal{#1}}}
\newcommand{\BBB}{\ensuremath{\set{B}}}
\newcommand{\DDD}{\ensuremath{\set{D}}}
\newcommand{\EEE}{\ensuremath{\set{E}}}
\newcommand{\PPP}{\ensuremath{\set{P}}}
\newcommand{\XXX}{\ensuremath{\set{X}}}
\newcommand{\YYY}{\ensuremath{\set{Y}}}
\newcommand{\eps}{\ensuremath{\varepsilon}}
\newcommand{\thetafkt}[1][\empty]{\ensuremath{\hat{\theta}(\ifthenelse{\equal{#1}{\empty}}{\rho,\aii,\bii}{#1})}}
\newcommand{\aii}{\ensuremath{\alpha}}
\newcommand{\bii}{\ensuremath{\beta}}
\newcommand{\CST}[1][\empty]{\ensuremath{C_{\ifthenelse{\equal{#1}{\empty}}{\aii,\bii}{#1}}}}
\pgfplotsset{compat=1.14}
\definecolor{myred}{RGB}{128, 0, 0}
\definecolor{myblue}{RGB}{0, 0, 128}
\tikzstyle{block} = [draw,fill=RoyalBlue!30,minimum size=2em,rounded corners=2mm]
\tikzstyle{symb}=[]
\def\radius{.8mm} 
\tikzstyle{branch}=[fill,shape=circle,minimum size=3pt,inner sep=0pt]
\tikzstyle{s}=[shift={(0mm,\radius)}]
\newcommand{\includetikz}[1]{%
  \ifdefined\tikzexternalize%
  \filename@parse{#1}%
  \tikzsetnextfilename{\filename@base}%
  \fi%
  \input{#1.tikz}%
}
\newcommand{\wvt}[1]{\ensuremath{\wt{\vt{#1}}}}
\newcommand{\lebesgue}{\lambda}
\newcommand{\Ball}{B}
\newacronym{pdf}{pdf}{probability density function}
\newacronym{dgc}{DGC}{dense graph condition}
\newacronym{rhs}{RHS}{right hand side}
\def\mytitle{On the Estimation of Information Measures of Continuous Distributions}
\begin{document}

\let\citealt\cite


\title{\mytitle}
\author[1]{Georg Pichler}
\author[2]{Pablo Piantanida}
\author[3]{Günther Koliander}
\affil[1]{Institute of Telecommunications, TU Wien, Vienna, Austria}
\affil[2]{Universit\'e Paris-Saclay, CNRS, CentraleSup\'elec, Laboratoire des Signaux et Syst\`emes,  Gif-sur-Yvette, France and Montreal Institute for Learning Algorithms (Mila), Universit\'e de Montr\'eal, QC, Canada}
\affil[3]{Acoustics Research Institute, Austrian Academy of Sciences, Vienna, Austria}
\date{}                     
\setcounter{Maxaffil}{0}
\renewcommand\Affilfont{\itshape\small}

\maketitle

\section*{Abstract}
The estimation of information measures of continuous distributions based on samples is a fundamental problem in statistics and machine learning. In this paper, we analyze estimates of differential entropy in $K$-dimensional Euclidean space, computed from a finite number of samples, when the probability density function belongs to a predetermined convex family $\PPP$.
First, estimating differential entropy to any accuracy is shown to be infeasible if the differential entropy of densities in $\PPP$ is unbounded, clearly showing the necessity of additional assumptions.
Subsequently, we investigate sufficient conditions that enable confidence bounds for the estimation of differential entropy.
In particular, we provide confidence bounds for simple histogram based estimation of differential entropy from a fixed number of samples, assuming that the probability density function is Lipschitz continuous with known Lipschitz constant and known, bounded support. Our focus is on differential entropy, but we provide examples that show that similar results hold for mutual information and relative entropy as well.

\def\acknowledgments{
  \section*{Acknowledgments}
  The authors are very grateful to Prof.\ Elisabeth Gassiat for pointing out the connection between the present work and the reference~\cite{Donoho1988One}.
  This project has received funding from the European Union’s Horizon 2020 research and innovation programme under the Marie Skłodowska-Curie grant agreement No 792464.
}

\section{Introduction}
\label{sec:introduction}

Many learning tasks, especially in unsupervised/semi-supervised settings, use information theoretic quantities, such as relative entropy, mutual information, differential entropy, or other divergence functionals as target functions in numerical optimization problems~\cite{Hjelm2018Learning,Hu2017Learning,Miyato2015Distributional,Oord2018Representation,Gordon2003Applying,Chen2016InfoGAN,Barber2003IM}. Furthermore, estimators for information theoretical quantities are useful in other fields, such as neuroscience \cite{Nemenman2004Entropy}.
As these quantities typically cannot be computed directly, surrogate functions, either upper/lower bounds, or estimates are used in place. Here, we will investigate the problem of estimating differential entropy using a finite number of samples. Throughout, we will restrict our attention to differential entropy, but similar results also hold for conditional differential entropy, mutual information and relative entropy (\cf \cref{sec:estim-other-meas}).

\subsection{Our Contribution}

The contributions of this work can be summarized as follows: 
\begin{itemize}
\item First, we explore the following basic but fundamental question: Fixing $C, \delta > 0$ and given $N \in \NN$ samples from a \gls{pdf} $\p{} \in \PPP$, where $\PPP$ is a family of \glspl{pdf} on $\RR^K$, is it possible to obtain an estimate $\hat h$ of the differential entropy $\dent{\p{}} < \infty$ satisfying 
\begin{equation}
\Prob{ |\hat h - \dent{\p{}}| > C} \le \delta\,?
\end{equation}
In~\cref{sec:estim-diff-entr}, we show that the answer to this question is negative (\cref{pro:diff_entropy_estimation}) if $\PPP$ is convex and the differential entropy of the \glspl{pdf} in $\PPP$ is unbounded.
\item  Subsequently, we investigate sufficient conditions for the class $\PPP$ that enable estimation of differential entropy with such a confidence bound and in~\cref{sec:lipschitz} (\cref{thm:lipschitz}) we show that a known, bounded support together with an $L$-Lipschitz continuous \gls{pdf} for fixed $L > 0$, suffices.\footnote{These assumptions assure that the differential entropy of the \glspl{pdf} in $\PPP$ is bounded. A known, bounded support bounds the differential entropy from above, and $L$-Lipschitz continuity bounds it from below.} For a simple histogram based estimator we explicitly compute a relation between probability of correct estimation, accuracy, dimension $K$, sample size $N$, and Lipschitz constant $L$. It is shown that estimation becomes impossible if either assumption is removed.

\item Finally, in~\cref{sec:estim-other-meas} we obtain impossibility results, similar to~\cref{pro:diff_entropy_estimation}, for the estimation of other information measures. 

\end{itemize}

\subsection{Previous Work}
\label{sec:previous-work}

The problem of estimating information measures from a finite number of samples is as old as information theory itself. Shortly after his seminal paper \cite{Shannon1948Mathematical}, Shannon worked on estimating the entropy rate of English text \cite{Shannon1951Prediction}. There have been numerous works on the estimation of information measures, such as entropy, mutual information, and differential entropy, since. There are many different approaches for estimating information measures, including kernel based methods, nearest neighbor methods, methods based on sample distances as well as multiple variants of plug-in estimates. Many estimators have been shown to be consistent and/or asymptotically unbiased under various constraints, \eg, in~\cite{Gyoerfi1987Density,Ahmad1976Nonparametric,Gao2017Estimating,Kandasamy2015Nonparametric,Sricharan2011k}. An excellent overview can be found in~\cite{Beirlant1997Nonparametric}.

In~\cite{Sricharan2011k}, rate-of-convergence results as well as a central limit theorem are provided for differential entropy and Rényi entropy. However, the confidence bounds and the constants involved in the rate-of-convergence results depend on the underlying distribution which is typically unknown.
Similarly, \cite{Han2017Optimal} obtains a rate-of-convergence result, assuming a Lipschitz ball smoothness assumption combined with known compact support, but the involved constants remain unspecified. In a similar spirit, \cite{Liu2012Exponential} provides asymptotic results for the estimation of differential entropy in two dimensions, when certain smoothness conditions are satisfied and the \gls{pdf} is bounded away from zero.
The related task of estimating relative entropy is studied, \eg, in \cite{Wang2005Divergence,Nguyen2010Estimating} and partition-based estimation of mutual information is analyzed in \cite{Darbellay1999Estimation}. While \cite{Wang2005Divergence} only shows consistency, convergence rates are obtained in \cite[Th.~2]{Nguyen2010Estimating}, but again the constants involved remain unspecified.

In contrast to our present work, the existing results for the estimation of differential entropy mentioned above fall short when addressing the practical problem of a finite sample size.
However, some results are available in a more general context. In \cite{Singh2016Finite}, a finite-sample analysis is conducted. Similar to our approach (\cf \cref{sec:lipschitz}), the authors of \cite{Singh2016Finite} assume a fixed support $[0,1]^K$, but instead of Lipschitz continuity, $\beta$-Hölder continuity, $\beta \in (0, 2]$ is assumed. Additionally, strict positivity on the interior of the support is required and the constants bounding the approximation error depend on the underlying, unknown distribution. 
These additional complications are likely due to the extended scope, as \cite{Singh2016Finite} is not focused on differential entropy, but the expectation of arbitrary functionals of the probability density.
The same authors also provide finite sample analysis for the estimation of Rényi divergence under similarly strong conditions in~\cite{Singh2014Generalized}.

There are several negative results, which clearly show that information measures are hard to estimate from a finite number of samples.
It was shown in \cite[Th.~4]{Antos2001Convergence} that rate-of-convergence results cannot be obtained for any consistent estimator of entropy on a countable alphabet and only when imposing various assumptions on the true distribution, rate-of-convergence results were obtained. More negative results on the estimation of entropy and mutual information can be found in \cite{Paninski2003Estimation}.
In fact, obtaining confidence bounds for information measures from samples is inherently difficult and requires regularity assumptions about the involved distributions, which are not subject to empirical test. In the seminal work of \cite{Bahadur1956Nonexistence} as well as subsequent works \cite{Gleser1987Nonexistence,Donoho1988One,Pfanzagl1998Nonexistence,Donoho1991Geometrizing} (and references therein) such necessary conditions for the estimation of statistical parameters with confidence bounds are discussed in great detail and generality.
The results of \cite{Donoho1988One,Pfanzagl1998Nonexistence} can be applied to differential entropy estimation and yield a result very similar to \cref{pro:diff_entropy_estimation}, essentially showing that differential entropy cannot be bounded using a finite number of samples, unless additional assumptions on the distribution are made.

Especially in the context of unsupervised and semi-su\-per\-vised machine learning it recently became popular to use variational bounds or estimates of information measures as part of the loss function for training neural networks \cite{Belghazi2018Mutual,Poole2019Variational,Hjelm2018Learning}.
Criticism to this approach, in particular the use of variational bounds, has been voiced \cite{McAllester2018Formal}. The current paper has a more general scope, dealing with the estimation problem of information measures in general, not limited to specific variational  bounds or techniques.

The information flow in neural networks is also a recent topic of investigation. In~\cite{Shwartz-Ziv2017Opening}, an argument for successive compression in the layers of a deep neural network is given, along the lines of the information bottleneck method~\cite{Tishby2000Information}. While flaws in this argument were pointed out \cite{Amjad2019Learning,Kolchinsky2018Caveats}, the authors of \cite{Goldfeld2019Estimating} found that a clustering phenomenon might elucidate the behavior of deep neural networks. These insights were obtained by estimating the differential entropy $\dent{\rvt x + \rvt g}$ of a sum of two random vectors, where $\rvt x$ is sub-Gaussian and $\rvt g \sim \gauss{\vt 0}{\sigma^2 \vt I}$ is an independent Gaussian vector. This is similar in spirit to the work conducted here, however, our assumption of compact support is replaced by assuming $\rvt x$ to be sub-Gaussian.%
\footnote{Similar to our assumption of an arbitrary but fixed compact support in \cref{sec:lipschitz}, the constant $K$ in the definition of the sub-Gaussian $\rvt x$ is assumed to be fixed in \cite[eq.~(1)]{Goldfeld2019Convergence}.}
Note that the \gls{pdf} of $\rvt x + \rvt g$ is Lipschitz continuous with fixed Lipschitz constant $L(\sigma^2)$, so \cite{Goldfeld2019Estimating} is implicitly also using a Lipschitz assumption.

\section{The Nonexistence of Confidence Sets}
\label{sec:estim-diff-entr}

Let $\PPP$ be a family of \glspl{pdf} on $\XXX \defas \RR^K$ with finite differential entropy, \ie, $\dent{\p{}} \defas -\int \p{\vt x} \log\p{\vt x} \,\dd \vt x \in \RR$ for every $\p{} \in \PPP$.

Suppose we observe $N$ \iid copies $\DDD \defas (\rvt x_1, \rvt x_2, \dots, \rvt x_N)$ of some random vector $\rvt x \sim \p{} \in \PPP$ and want to obtain an estimate of differential entropy from these samples $\DDD$.
Such an estimator is a function $\hat h \colon \XXX^N \to \RR$ that maps $\DDD$ into $\hat h(\DDD)$, approximating the differential entropy $h \defas \dent{\rvt x} \defas \dent{\p{}} < \infty$.
Its accuracy can be measured by a confidence interval, a widely used tool in statistical practice for indicating the precision of point estimators.
For a given error probability $\delta > 0$, we would like to have $C > 0$ such that $|h - \hat h(\DDD)| \ge C$ with probability less than $\delta$, \ie, a confidence interval of size $C$ with confidence $1 - \delta$.
However, there is no free lunch when estimating differential entropy, as evidenced by the following result, a corollary of a more general result in \cite{Donoho1988One}, here specialized to a bound of differential entropy. It is based on the abstract notion of a \gls{dgc}.\footnote{$\mathrm h$ satisfies the \gls{dgc} over $\PPP$ if the graph of $\mathrm h$ over $\PPP$ is dense in its own epigraph \cite[eq.~(2.4)]{Donoho1988One}.}
\begin{theorem}[{\citealt[Th.~2.1]{Donoho1988One}}]
  \label{thm:donoho}
    Assume that $\mathrm{h}$ satisfies the \gls{dgc} over $\PPP$ and define $B \defas \sup\{\dent{\p{}} : \p{} \in \PPP\}$, where, \eg, $B = +\infty$.
  If 
  for any $C>0$, 
  $\sup_{\p{} \in \PPP} \Prob[\p{}][normal]{\hat h(\DDD) 
  +C
  \le B} = 1$, then
  \begin{align}
    \inf_{\p{} \in \PPP} \Prob[\p{}]{\dent{\p{}} - \hat h(\DDD) \le C} = 0 .
  \end{align}
\end{theorem}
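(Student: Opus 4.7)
The plan is to specialize the general argument of~\cite[Th.~2.1]{Donoho1988One}, exploiting the fact that the \gls{dgc} lets one perturb any density $\p[*]{}\in\PPP$ by an arbitrarily small amount in total variation into a density $\p{}\in\PPP$ whose differential entropy is arbitrarily close to $B$. Since the product laws satisfy the coupling bound $d_{\mathrm{TV}}(\p{}^{\otimes N},\p[*]{}^{\otimes N})\le N\,d_{\mathrm{TV}}(\p{},\p[*]{})$, any event with probability close to $1$ under $\p[*]{}^{\otimes N}$ retains nearly that probability under $\p{}^{\otimes N}$. Choosing $\p[*]{}$ so that the estimator is small ($\hat h(\DDD)\le B-C'$) and $\p{}$ so that the true entropy is near $B$ therefore forces the gap $\dent{\p{}}-\hat h(\DDD)$ to exceed any prescribed $C<C'$ with high probability under $\p{}$.

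\textbf{Main argument when $B$ is finite.} Fix $\eps,C>0$ and choose $C'\defas C+\eta$ for a small $\eta>0$. Applying the hypothesis at level $C'$, one obtains $\p[*]{}\in\PPP$ with $\Prob[\p[*]{}]{\hat h(\DDD)\le B-C'}\ge 1-\eps/2$. Because $(\p[*]{},B)$ lies in the epigraph of $\dent{\cdot}$ over $\PPP$, the \gls{dgc} yields a sequence $(\p[n]{})_{n\in\NN}\subset\PPP$ with $\p[n]{}\to\p[*]{}$ in total variation and $\dent{\p[n]{}}\to B$. Picking $n$ so large that $\dent{\p[n]{}}\ge B-\eta/2$ and $N\,d_{\mathrm{TV}}(\p[n]{},\p[*]{})\le\eps/2$, the coupling inequality gives $\Prob[\p[n]{}]{\hat h(\DDD)\le B-C'}\ge 1-\eps$. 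On that event, $\dent{\p[n]{}}-\hat h(\DDD)\ge(B-\eta/2)-(B-C')=C+\eta/2>C$, so $\Prob[\p[n]{}]{\dent{\p[n]{}}-\hat h(\DDD)\le C}\le\eps$. Since $\eps$ was arbitrary, the stated infimum is zero.

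\textbf{Case $B=+\infty$ and principal obstacle.} When $B=+\infty$ the hypothesis is vacuous and the argument rests on the \gls{dgc} alone. Starting from any $\p[0]{}\in\PPP$, pick $M$ with $\Prob[\p[0]{}]{\hat h(\DDD)\le M}\ge 1-\eps/2$ (possible since $\hat h(\DDD)$ is almost surely finite under $\p[0]{}$) and apply the \gls{dgc} to the epigraph point $(\p[0]{},M+2C)$ to obtain $\p{}\in\PPP$ with $d_{\mathrm{TV}}(\p{},\p[0]{})\le\eps/(2N)$ and $\dent{\p{}}\ge M+C$; the remainder of the argument is identical to the finite-$B$ case. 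The principal obstacle lies not in the algebraic manipulations but in pinning down the topology implicit in the \gls{dgc}: one needs it to be at least as strong as total variation so that the coupling step is legitimate. This is the natural choice here, so the step is bookkeeping rather than a substantive difficulty, and the finer points can be deferred to~\cite{Donoho1988One}.
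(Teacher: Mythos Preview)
The paper does not prove \cref{thm:donoho}; it is quoted verbatim from \cite[Th.~2.1]{Donoho1988One} and no argument is supplied. The authors explicitly remark (\cref{rmk:donoho}) that the proof in \cite{Donoho1988One} ``hinges on deep statistical results,'' and for that reason they bypass it entirely, proving instead the closely related \cref{pro:diff_entropy_estimation} by a short, self-contained construction. So there is no ``paper's own proof'' to compare your proposal against.

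That said, your sketch is a reasonable reconstruction of the Donoho argument in spirit: perturb in total variation to a nearby density with entropy near $B$, then transfer the event $\{\hat h(\DDD)\le B-C'\}$ via the product coupling bound. The one point that deserves more care is precisely the one you flag: the \gls{dgc} in \cite{Donoho1988One} is stated with respect to a specific topology on $\PPP$, and the argument only goes through if convergence in that topology controls $N$-fold product probabilities. You assert this is total variation (or stronger), but that is not something you can simply declare---it is part of the framework of \cite{Donoho1988One} and needs to be checked against the definition there, not assumed. This is not a fatal gap, but it is more than bookkeeping: it is exactly the place where the ``deep statistical results'' the paper alludes to enter.

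If your goal is to have a proof that lives entirely within this paper, you would be better served adapting the direct mixture construction used for \cref{pro:diff_entropy_estimation}, which achieves a comparable conclusion under the concrete hypotheses (convexity of $\PPP$ and unbounded entropy) without invoking the \gls{dgc} at all.
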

A similar result follows from ~\cite[Prop.~3.1]{Pfanzagl1998Nonexistence}.

We will not work with the \gls{dgc}, but make two practical assumptions: $\PPP$ is a convex family and the differential entropy of the \glspl{pdf} in $\PPP$ is unbounded (either from above or from below).
Under these assumptions we show that for any $\delta, C > 0$ there is a \gls{pdf} $\p{} \in \PPP$, such that $|h - \hat h(\DDD)| \le C$ with probability less than $\delta$, \ie, $\hat h(\DDD)$ is far from $h$ with high probability. Fundamentally, this follows from the fact that $\PPP$ contains \glspl{pdf} with a large difference in differential entropy, which cannot be accurately distinguished based on samples.
Similar results hold true for mutual information and relative entropy and are given in \cref{sec:estim-other-meas}.

\begin{proposition}
  \label{pro:diff_entropy_estimation}
  Let $\PPP$ be a convex family of \glspl{pdf} with unbounded differential entropy, \ie, for any $\alpha \in [0,1]$ and $\p{},\q{} \in \PPP$, 
  we have $\alpha \p{} + (1-\alpha)\q{} \in \PPP$ as well as $\sup_{\q{} \in \PPP} |\dent{\q{}}| = \infty$.
    Then, for any pair of constants $C, \delta > 0$, there exists a continuous random vector $\rvt x \sim \p{} \in \PPP$, satisfying
  \begin{equation}
    \Prob[\p{}]{| \dent{\p{}} - \hat h(\DDD) | \le C} \le \delta .
  \end{equation}
\end{proposition}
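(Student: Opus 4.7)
My plan is to construct a family of $M \defas \lceil 2/\delta \rceil + 1$ mixture \glspl{pdf} $\mathrm{p}_1,\dots,\mathrm{p}_M$ inside $\PPP$ whose differential entropies are pairwise separated by more than $2C$, yet whose $N$-sample laws are simultaneously within total variation $N\alpha_i$ of one common reference law $\mathrm{q}^N$. Because $\hat h(\DDD)$ returns a single real number, the $2C$-separation makes the events $E_i \defas \{|\hat h(\DDD) - \dent{\mathrm{p}_i}| \le C\}$ pairwise disjoint, so $\sum_i \Prob[\mathrm{q}]{E_i} \le 1$ under the reference law. Transferring this inequality to each $\mathrm{p}_i$ through coupling, and choosing the mixing weights tiny enough, will force $\sum_i \Prob[\mathrm{p}_i]{E_i} < 2$; since $M\delta > 2$, at least one index $i^*$ must then satisfy $\Prob[\mathrm{p}_{i^*}]{E_{i^*}} < \delta$, and $\p{} \defas \mathrm{p}_{i^*}$ witnesses the claim.

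\textbf{Construction.} Assume without loss of generality that $\sup_{\q{}\in\PPP}\dent{\q{}} = +\infty$ (the symmetric case $\inf \dent{\q{}} = -\infty$ is analogous with signs reversed). Fix an arbitrary $\mathrm{q}\in\PPP$, pick $\alpha_0>0$ so small that $NM(M+1)\alpha_0/2 < 1$ and $M\alpha_0 \le 1/2$, and then use the hypothesis to select $\mathrm{q}'\in\PPP$ with
\[
\dent{\mathrm{q}'} - \dent{\mathrm{q}} \;>\; \frac{2C + \binEnt{M\alpha_0}}{\alpha_0}.
\]
Set $\alpha_i \defas i\alpha_0$ and $\mathrm{p}_i \defas \alpha_i\, \mathrm{q}' + (1-\alpha_i)\, \mathrm{q}$ for $i=1,\dots,M$; convexity of $\PPP$ puts each $\mathrm{p}_i$ into $\PPP$.

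\textbf{Two key estimates.} Combining concavity of differential entropy with the grouping bound $\dent{X} \le \condDent{X}{Z} + \ent{Z}$ applied to the Bernoulli($\alpha_i$) mixture indicator $Z$ sandwiches
\[
\alpha_i\dent{\mathrm{q}'} + (1-\alpha_i)\dent{\mathrm{q}} \;\le\; \dent{\mathrm{p}_i} \;\le\; \alpha_i\dent{\mathrm{q}'} + (1-\alpha_i)\dent{\mathrm{q}} + \binEnt{\alpha_i};
\]
subtracting these sandwiches for $i>j$ and invoking the calibration of $\mathrm{q}'$ gives $\dent{\mathrm{p}_i} - \dent{\mathrm{p}_j} > 2C$, as needed. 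Independently, coupling each of the $N$ draws from $\mathrm{p}_i$ to a draw from $\mathrm{q}$ via the mixture indicator shows the two $N$-tuples coincide unless some indicator equals $1$, an event of probability at most $N\alpha_i$; hence $\Prob[\mathrm{p}_i]{E_i} \le \Prob[\mathrm{q}]{E_i} + N\alpha_i$.

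\textbf{Conclusion and main obstacle.} Summing over $i$, using disjointness of the $E_i$ under $\mathrm{q}$ together with $\sum_{i=1}^M \alpha_i = \alpha_0 M(M+1)/2$,
\[
\sum_{i=1}^M \Prob[\mathrm{p}_i]{E_i} \;\le\; 1 + N\sum_{i=1}^M \alpha_i \;\le\; 1 + \tfrac{1}{2}NM(M+1)\alpha_0 \;<\; 2,
\]
so some index $i^*$ must satisfy $\Prob[\mathrm{p}_{i^*}]{E_{i^*}} < 2/M < \delta$, completing the argument. The only genuine subtlety is ensuring the pairwise entropy separation survives the $\binEnt{\alpha_i}$ slack from the grouping inequality; I avoid this by fixing $\alpha_0$ first (which pins down the bounded slack $\binEnt{M\alpha_0}$) and only afterwards choosing $\mathrm{q}'$ large enough that $\alpha_0\bigl(\dent{\mathrm{q}'} - \dent{\mathrm{q}}\bigr)$ dominates $2C + \binEnt{M\alpha_0}$. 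This is precisely where the unboundedness hypothesis on $\PPP$ is used.
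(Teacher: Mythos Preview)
Your argument is correct. The entropy-separation step survives the $\binEnt{\alpha_j}$ slack exactly as you arranged it: for $i>j$, concavity gives $\dent{\mathrm{p}_i}\ge\alpha_i\dent{\mathrm{q}'}+(1-\alpha_i)\dent{\mathrm{q}}$ while the grouping inequality gives $\dent{\mathrm{p}_j}\le\alpha_j\dent{\mathrm{q}'}+(1-\alpha_j)\dent{\mathrm{q}}+\binEnt{\alpha_j}$, so the gap is at least $(\alpha_i-\alpha_j)\bigl(\dent{\mathrm{q}'}-\dent{\mathrm{q}}\bigr)-\binEnt{M\alpha_0}>2C$ by your choice of $\mathrm{q}'$. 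The coupling bound $\Prob[\mathrm{p}_i]{E_i}\le\Prob[\mathrm{q}]{E_i}+N\alpha_i$ and the disjointness-plus-pigeonhole conclusion are clean.

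The paper takes a genuinely different and somewhat shorter route. It fixes a single reference $\op{}\in\PPP$, first observes what the estimator actually does on $\op{}$-samples (choosing $b$ so that $|\hat h(\orvt x_1,\dots,\orvt x_N)-\dent{\op{}}|\le b$ with probability $\ge 1-\delta/2$; such $b$ exists simply because $\hat h$ is a.s.\ finite), and only then constructs \emph{one} mixture $\p{}=(1-\eps)\op{}+\eps\,\wtp{}$ with $\eps\le\delta/(2N)$ and $\wtp{}$ extreme enough that $|\dent{\p{}}-\dent{\op{}}|>b+C$. The same coupling shows the $\p{}$-samples coincide with $\op{}$-samples with probability $\ge 1-\delta/2$, so $\hat h$ stays near $\dent{\op{}}$ and hence misses $\dent{\p{}}$ by more than $C$ with probability $\ge 1-\delta$. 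Your Le~Cam-style multi-hypothesis argument avoids the ``first look at the estimator, then choose $b$'' step by instead manufacturing $M\approx 2/\delta$ candidates with disjoint acceptance regions and pigeonholing; this is a more systematic template (and would port easily to other functionals), at the cost of a slightly heavier construction. The paper's proof is more direct and needs only two densities.
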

\begin{remark}
  \label{rmk:donoho}
  Before proceeding with the proof of \cref{pro:diff_entropy_estimation}, we note that this result could be proved as a consequence of \cref{thm:donoho}. 
  However, this would necessitate  to show that our conditions imply the \gls{dgc}.
    Furthermore, the proof of \cite[Th.~2.1]{Donoho1988One} itself hinges on deep statistical results and thus we opted for providing a short, self-contained proof.
\end{remark}
\begin{proof}[Proof of \cref{pro:diff_entropy_estimation}]
  The function $\hat h$, constants $C, \delta > 0$, and the sample size $N \in \NN$ are arbitrary, but fixed.
  Choose an arbitrary $\orvt x \sim \op{} \in \PPP$ and let $\ol h \defas \dent{\op{}} < \infty$. Then fix $b > 0$, such that $\Prob{|\hat h(\orvt x_1, \orvt x_2, \dots, \orvt x_N) - \ol h| \le b} \ge 1-\frac{\delta}{2}$, where $(\orvt x_n)_{n=1,\dots,N}$ are \iid copies of $\orvt x$.
  Furthermore, let $\rv q \sim \bernoulli{1-\eps}$ be a Bernoulli random variable with parameter $1-\eps = \Prob{\rv q = 1}$, independent of $\orvt x$, where $0 < \eps \le \frac{\delta}{2N}$. Choose $a > 0$ such that $a\eps > b + C + \log 2$.
    By our assumption $\sup_{\q{} \in \PPP} |\dent{\q{}}| = \infty$, we can
  find $\wrvt x \sim \wtp{} \in \PPP$ with $\wt h \defas \dent{\wtp{}}$ such that $|\wt h - \ol h| \ge a$.

  Define
  $
  \rvt x \defas \rv q \orvt x - (1-\rv q)\wrvt x
  $, which yields $h = \dent{\rvt x} = \mutInf{\rvt x}{\rv q} + \condDent{\rvt x}{\rv q} = \mutInf{\rvt x}{\rv q} + (1-\eps) \ol h + \eps \wt h \in (1-\eps) \ol h + \eps \wt h + [0,\log 2]$ where $\mutInf{\wc}{\wc}$ denotes mutual information.
   For convenience we use $\hrv h \defas \hat h(\DDD)$, where $\DDD = (\rvt x_1, \rvt x_2, \dots, \rvt x_N)$ and define the event $\EEE = \{\rv q_1 = \rv q_2 = \cdots = \rv q_N = 1\} = \{\DDD = (\orvt x_1, \orvt x_2, \dots, \orvt x_N)\}$. By the union bound, we have $\Prob{\EEE} \ge 1 - N\eps$ and obtain
   \begin{align}
     & \Prob{|\hrv h - \ol h| \le b} \eqnl
     &\qquad = \Prob{\EEE} \Pcond{|\hrv h - \ol h| \le b}{\EEE} \eqnl
     &\qquad\quad + \Prob{\EEE\compl} \Pcond{|\hrv h - \ol h| \le b}{\EEE\compl} \\
     &\qquad \ge (1-N\eps) \Pcond{|\hrv h - \ol h| \le b}{\EEE}  \\
     &\qquad = (1-N\eps) \Prob{|\hat h(\orvt x_1, \orvt x_2, \dots, \orvt x_N) - \ol h| \le b}  \\
     &\qquad \ge \left(1-\frac{\delta}{2}\right)^2 
       \ge 1-\delta . 
  \end{align}
  We thus found $\rvt x$ such that
  \begin{align}
    \label{eq:1}
    & \Prob{|h - \hrv h| \le C}
    \\*
    & \qquad \le \Prob{|\ol h - \hrv h|  \ge |\ol h - h| - C} \\
    & \qquad \le \Prob{|\ol h - \hrv h|  \ge \eps |\ol h - \wt h| - \log 2 - C} \\
    & \qquad \le \Prob{|\ol h - \hrv h|  \ge \eps a - \log 2 - C} \\
    & \qquad \le \Prob{|\ol h - \hrv h|  > b} \\
    & \qquad = 1 - \Prob{|\ol h - \hrv h|  \le b} \le \delta . 
    \qedhere
  \end{align}
\end{proof}

\begin{remark}
  \Cref{pro:diff_entropy_estimation} shows that in order to obtain confidence bounds, one needs to make assumptions about the underlying distribution. However, as pointed out in \cite[p.~1395]{Donoho1988One}, when making these assumptions, one uses information external to the samples.
\end{remark}

\begin{remark}
  \label{rmk:donoho_example}
  Note that the family of all \glspl{pdf} with support $[0,1]^K$ satisfies the requirements of \cref{pro:diff_entropy_estimation}. It also satisfies the \gls{dgc}, but it is not strongly nonparametric, as defined in \cite[p.~1395]{Donoho1988One}.
\end{remark}

\section{Lipschitz Density Assumption }
\label{sec:lipschitz}

One way to avoid the problems outlined in \cref{sec:estim-diff-entr} is to impose additional assumptions on the underlying probability distribution, that bound the differential entropy from above and from below.
We will showcase that the differential entropy of an $L$-Lipschitz continuous \gls{pdf} with fixed, known $L > 0$ on $\RR^K$ and known, compact support $\XXX$ can be well approximated from samples.
In the following, let $\rvt x \sim \p{}$ be supported\footnote{Any known compact support suffices. An affine transformation then yields $\XXX = [0,1]^K$, while possibly resulting in a different Lipschitz constant.} on $\XXX \defas [0,1]^K$, \ie, $\int_{\XXX} \p{} \,\dd\lebesgue^K = \Prob{\rvt x \in \XXX} = 1$, where $\lebesgue^K$ denotes the Lebesgue measure on $\RR^K$.
The \gls{pdf} $\p{}\colon \RR^K \to \RR_+$ of $\rvt x$ is assumed to be $L$-Lipschitz continuous on $\RR^K$ with some fixed $L > 0$, where $\RR^K$ is equipped with the $\ell_1$-norm\footnote{The $L_1$ norm is only chosen to facilitate subsequent computations. By the equivalence of norms on $\RR^K$, any norm suffices.} $\norm{\vt x} \defas \norm{\vt x}_1 = \sum_k{|x_k|}$, hence,
\begin{align}
  \forall \vt x, \vt y \in \RR^K : |\p{\vt x} - \p{\vt y}| \le L \norm{\vt x - \vt y} .
\end{align}

Given $N$ \iid copies $\DDD = (\rvt x_1, \rvt x_2, \dots, \rvt x_N)$ of $\rvt x$, let $\rvt y$ be distributed according to the empirical distribution of $\DDD$, \ie, $\rvt y = \rvt x_{\rv u}$, where $\rv u \sim \uniform{\Nto{N}}$ is a uniform random variable on $\Nto{N}$. Let the discrete random vector $\wrvt y = \Delta_M(\rvt y)$ be the element-wise quantization of $\rvt y$, where $\Delta_M(x) \defas \frac{\lfloor Mx \rfloor}{M}$ is the $M$-step discretization of $[0,1]$ for some $M \in \NN$.
Additionally define the continuous random vector $\orvt y = \wrvt y + \uniform{[0,\frac 1M]^K}$, \ie, independent uniform noise is added. Note also that $\dent[normal]{\orvt y} - \ent[normal]{\wrvt y} = - K \log M$, where $\ent{\wc}$ denotes Shannon entropy.

We will estimate differential entropy by $\hat h(\DDD) = \dent[norma]{\orvt y} = \ent[normal]{\wrvt y} - K \log M$, \ie, the Shannon entropy of the discretized and binned samples with a correction factor.

In the following, we shall also use the two constants
\pgfmathsetmacro{\myalpha}{( (sqrt(exp(2)+4)-exp(1)) )/(2*exp(1))}
\begin{align}
  \eta(K,L) &\defas  \frac{1}{K} \left(\frac{2 (K+1)!}{L}\right)^{\frac{1}{K+1}} \text{ , and} \\
  \alpha &\defas \frac{\sqrt{\ee{2} + 4} - \ee{}}{2 \ee{}} \approx \pgfmathprintnumber[fixed,precision=5]{\myalpha} .
\end{align}

\begin{theorem}
  \label{thm:lipschitz}
  For $M \ge \frac{1}{\alpha \eta(K,L)}$ and any $\delta \in (0,1)$, we have with probability greater than $1 - \delta$ that
  \begin{subequations}
    \label{eq:lipschitz}
    \begin{align}
      \big|\hat h(\DDD) - \dent{\rvt x} \big| 
      & \le \frac{LK}{2M} \log(M \eta(K,L))
        \label{eq:lipschitz:bias1}
      \\
      & \quad
        + \sqrt{\frac{2}{N} \log\frac{2}{\delta}} \log N \label{eq:lipschitz:var}
      \\
    & \quad
      + \log\left( 1 + \frac{M^K-1}{N} \right) .
      \label{eq:lipschitz:bias2}
    \end{align}
  \end{subequations}
\end{theorem}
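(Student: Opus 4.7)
The plan is to introduce an auxiliary ``truly binned'' density and split the estimation error into three pieces that match the three summands on the right-hand side of \cref{eq:lipschitz}. Partition $\XXX = [0,1]^K$ into the $M^K$ cubes $B_i$ of side $1/M$, set $P_i \defas \int_{B_i} \p{\vt u}\,\dd\vt u$, and consider the piecewise-constant density $\bar{\mathrm p}$ equal to $M^K P_i$ on $B_i$; let $\rvt z \sim \bar{\mathrm p}$ and let $\rv w$ denote the discrete bin label of $\rvt z$. A direct calculation yields $\dent{\rvt z} = \ent{\rv w} - K \log M$, while by construction $\hat h(\DDD) = \ent{\wrvt y} - K \log M$, where $\wrvt y$ has the empirical bin probabilities. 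Two triangle inequalities then give
\begin{align*}
\big|\hat h(\DDD) - \dent{\rvt x}\big| \le |\dent{\rvt z} - \dent{\rvt x}| + \big|\ent{\wrvt y} - \Exp{\ent{\wrvt y}}\big| + \big|\Exp{\ent{\wrvt y}} - \ent{\rv w}\big| ,
\end{align*}
and I will bound these three terms, in order, by the three lines of \cref{eq:lipschitz}.

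For the first, deterministic term (the smoothing bias), the Lipschitz assumption enforces $|\p{\vt x} - M^K P_i| \le LK/M \asdef A$ on every bin; by concavity of $f(t) \defas -t \log t$ together with Jensen's inequality one also gets $\dent{\rvt z} \ge \dent{\rvt x}$, so only an upper bound on their difference is required. The plan is a threshold-splitting argument: fix $\tau > A$ and separate ``high'' bins, where $M^K P_i \ge \tau$, from ``low'' bins, where $M^K P_i < \tau$. On high bins, a second-order Taylor expansion of $f$ around $M^K P_i$ combined with $\int_{B_i}(\p{\vt x} - M^K P_i)\,\dd\vt x = 0$ yields a contribution controlled by $A^2/\tau$ per unit volume. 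On low bins, both $|f(\p{\vt x})|$ and $|f(M^K P_i)|$ are bounded directly, using that their arguments lie in $[0, \tau + A]$ and that $-t\log t$ is easily controlled there. Optimising $\tau$ as a function of $A$, $K$ and $L$ is what produces the constant $\eta(K,L)$, and the hypothesis $M \ge 1/(\alpha\eta(K,L))$ merely guarantees that the optimal $\tau$ lies in the admissible range.

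For the fluctuation term I intend to invoke McDiarmid's bounded-differences inequality. Replacing one of the $N$ samples alters exactly two of the empirical bin probabilities by $\pm 1/N$, and a short case split (distinguishing whether an affected probability is close to $0$) shows that the bounded-difference constant is at most $2\log N / N$. McDiarmid then gives
\begin{align*}
\Prob{\big|\ent{\wrvt y} - \Exp{\ent{\wrvt y}}\big| \ge t} \le 2\exp\left(-\frac{Nt^2}{2\log^2 N}\right),
\end{align*}
and equating the right-hand side with $\delta$ and solving for $t$ recovers \cref{eq:lipschitz:var}. For the plug-in bias term, the classical Jensen-based bound on the plug-in entropy estimator on an alphabet of size $M^K$ from $N$ samples yields $\ent{\rv w} - \Exp{\ent{\wrvt y}} \le \log\bigl(1 + (M^K - 1)/N\bigr)$, which is precisely \cref{eq:lipschitz:bias2}.

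The main obstacle is the smoothing bound: since $f'(t) \to -\infty$ as $t \downarrow 0$, the pointwise closeness of $\p{}$ and $\bar{\mathrm p}$ guaranteed by the Lipschitz condition does not translate into a uniform bound on $f(\p{}) - f(\bar{\mathrm p})$, and a single global Taylor estimate fails. Only the careful threshold splitting, together with the explicit optimisation that produces $\eta(K,L)$ and the accompanying admissibility condition $M \ge 1/(\alpha\eta(K,L))$, gives a clean bound of the form appearing in \cref{eq:lipschitz:bias1}. The fluctuation and plug-in bias steps are, by contrast, largely off-the-shelf tools from empirical-process theory and classical plug-in entropy estimation.
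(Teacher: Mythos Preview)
Your three-term decomposition matches the paper's exactly, and for the fluctuation term and the plug-in bias term you correctly identify the standard tools (McDiarmid and the Jensen-based bias bound on an alphabet of size $M^K$); the paper simply cites these as known results from \cite{Paninski2003Estimation} and \cite{Antos2001Convergence}. The divergence is entirely in the smoothing bias, where your route and the paper's are genuinely different.

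The paper does \emph{not} use a threshold-splitting argument. Instead it makes two observations you omit. First, Lipschitz continuity alone forces a uniform upper bound on the density: integrating the cone $\vt x' \mapsto \p{\vt x} - L\norm{\vt x - \vt x'}$ over the $\ell_1$-ball of radius $\p{\vt x}/L$ yields $\p{} \le A \defas \bigl(L^K(K+1)!/2^K\bigr)^{1/(K+1)}$. This is where the factor $(K+1)!$ in $\eta(K,L)$ comes from. Second, after rescaling so that both densities lie in $[0,1]$, one applies the elementary pointwise inequality $|x\log x - y\log y| \le -|x-y|\log|x-y|$, valid whenever $x\in[0,1]$ and $|x-y|\le\alpha$; this is the origin of the specific numerical constant $\alpha = (\sqrt{\ee{2}+4}-\ee{})/(2\ee{})$ and of the admissibility condition $M \ge 1/(\alpha\eta(K,L))$. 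Combining these with $|\p{}-\q{}| \le LK/(2M)$ (note: the paper gets $LK/(2M)$, not your $LK/M$) gives exactly \cref{eq:lipschitz:bias1}.

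Your threshold-and-Taylor approach is a reasonable alternative, and your Jensen observation that $\dent{\rvt z} \ge \dent{\rvt x}$ (which the paper does not exploit) is nice. But the claim that optimising $\tau$ ``is what produces the constant $\eta(K,L)$'' is not substantiated: the specific shape of $\eta(K,L)$ and of the threshold $\alpha$ are artefacts of the paper's upper-bound-and-rescale argument, not of any threshold optimisation, and your scheme would most likely yield a different constant. If the goal is to reproduce the statement as written, you need the paper's route; if a bound of the same order with an unspecified constant suffices, your plan is plausible but the low-bin contribution would need more care than you sketch, since you have no a priori control on the total measure of low bins beyond the trivial bound $1$.
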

The proof will be given in \cref{sec:proof:lipschitz}.

\begin{remark}
  Of the three error terms \cref{eq:lipschitz:bias1,eq:lipschitz:var,eq:lipschitz:bias2}, the terms \cref{eq:lipschitz:bias1,eq:lipschitz:bias2} constitute the bias and \cref{eq:lipschitz:var} is a variance-like error term. While the variance \cref{eq:lipschitz:var} vanishes as $N \to \infty$, the term \cref{eq:lipschitz:bias1} does not depend on the sample size $N$ as it merely measures the error incurred due to the quantization $\Delta_M$, which is bounded by the Lipschitz constraint and approaches zero as $M \to \infty$. The final term \cref{eq:lipschitz:bias2} results from ensuring that $N$ samples suffice to suitably approximate the empirical distribution over $M$ quantization steps. Thus, it ties the quantization to the sample size and approaches zero if $\frac{M^K}{N} \to 0$.
  In total, the \gls{rhs} of \cref{eq:lipschitz} approaches zero for $N,M \to \infty$ provided that $\frac{M^K}{N} \to 0$.
\end{remark}

\begin{remark}
  \label{rmk:lipschitz_not_tight}
  \Cref{thm:lipschitz} should be regarded as a proof-of-concept rather than a practical tool for performing differential entropy estimation. While analytically tractable, the estimation strategy is crude and the bounds, especially the term \cref{eq:lipschitz:var}, while being a completely universal bound, is know to be loose, as pointed out in \cite[p.~1200]{Paninski2003Estimation}.
\end{remark}

\begin{remark}
  \label{rmk:assumptions_necessity}
  We want to note that requiring both a fixed Lipschitz constant $L$ and a known bounded support, \eg, $\XXX = [0,1]^K$, is necessary. Consider for instance the set $\PPP' = \{\p{} : \p{}$ supported on $[0,1]^K$  and Lipschitz continuous$\}$ of \glspl{pdf} with arbitrary Lipschitz constant or the set $\PPP'' = \{\p{} : \p{}$ supported on a bounded set and $L$-Lipschitz continuous$\}$ with fixed Lipschitz constant, but arbitrary, bounded support. Both families satisfy the conditions of \cref{pro:diff_entropy_estimation}, \ie, they are convex and $\sup_{\p{} \in \PPP'} |\dent{\p{}}| = \sup_{\p{} \in \PPP''} |\dent{\p{}}| = \infty$.
\end{remark}

In principle, \cref{thm:lipschitz} also allows for the approximation of mutual information with a confidence bound. Let $(\rvt x, \rvt y) \sim \p[\rvt x \rvt y]{}$ be two random vectors, supported on $[0,1]^{K_1}$ and $[0,1]^{K_2}$, respectively. Assuming that $\p[\rvt x \rvt y]{}$ is $L$-Lipschitz continuous on $\RR^{K_1+K_2}$, it is clear that the marginals $\p[\rvt x]{}$ and $\p[\rvt y]{}$ are $L$-Lipschitz continuous as well. Thus, \cref{thm:lipschitz} can be used to approximate all three terms in
\begin{align}
  \label{eq:mut_inf}
  \mutInf{\rvt x}{\rvt y} = \dent{\rvt x} + \dent{\rvt y} - \dent{(\rvt x, \rvt y)} .
\end{align}

\section{Estimation of other Measures}
\label{sec:estim-other-meas}

In this \lcnamecref{sec:estim-other-meas}, we showcase that similar statements as \cref{pro:diff_entropy_estimation} also hold for mutual information and relative entropy.
For simplicity we will not assume $\p[\rv x \rv y]{} \in \PPP$ for some family $\PPP$ of probability density functions, but merely require $\mutInf{\rv x}{\rv y}, \DKL{\rv x}{\rv y} < \infty$. Only proof sketches are provided as the examples provided in this \lcnamecref{sec:estim-other-meas} are similar to the proof of \cref{pro:diff_entropy_estimation}.

Here we show that in general, it is not possible to accurately estimate mutual information $\mutInf{\rv x}{\rv y}$ and relative entropy $\DKL{\rv x}{\rv y}$ from samples $(\rvt x, \rvt y)$.

\subsection{Mutual Information}
\label{sec:mutual-information}

For any $N$, let $\hat i \colon \RR^N \times \RR^N \to \RR$ be a measurable function, which represents an estimate of the mutual information $I \defas \mutInf{\rv x}{\rv y} < \infty$ from $\rvt x, \rvt y$.
For convenience we use $\hrv i \defas \hat i(\rvt x, \rvt y)$. 
Let $\rv x, \rv z \sim \uniform{[0,1]}$, $\rv w \sim \uniform{[0,\ee{-a}]}$, and $\rv q \sim \bernoulli{1-\eps}$ be independent random variables. Define
\begin{align}
    \rv y &\defas \rv q \rv z - (1-\rv q) (\rv x + \rv w) .
\end{align}
We have 
\begin{align}
  \mutInf{\rv x}{\rv y} 
  &= \dent{\rv y} - \condDent{\rv y}{\rv x} \\
                                      &\ge \binEnt{\eps} - \binEnt{\eps} + a \eps   \\
                              &= a \eps .
\end{align}
The random vectors $(\rvt x, \rvt y)$ [$(\rvt x, \rvt z)$] are $N$ \iid realizations of $(\rv x, \rv y)$ [$(\rv x, \rv z)$].
For any $\delta > 0$, we can find $b \in \RR$ such that $\Prob{\hat i(\rvt x, \rvt z) \le b} \ge 1 - \frac{\delta}{2}$.
Letting $\EEE 
= \{\rvt y = \rvt z \} = \{\rvt q = \rvt 1\}$, we have $\Prob{\EEE} \ge 1 - N\eps$.
Thus, when choosing $\eps \le \frac{\delta}{2N}$,
\begin{align}
  \Prob{\hrv i \le b} &= \Prob{\EEE} \Pcond{\hrv i \le b}{\EEE} + \Prob{\EEE\compl} \Pcond{\hrv i \le b}{\EEE\compl} \\
                        &\ge (1 - N\eps) \Pcond{\hrv i \le b}{\EEE} \\
                        &\ge (1 - N\eps) \left(1 - \frac{\delta}{2}\right) \\
                        &\ge 1 - \delta .
\end{align}
We may choose $a \ge \frac{b + C}{\eps}$. Then, for arbitrary $C, \delta > 0$ and $n \in \NN$, we found $\rv x$, $\rv y$ and $b \in \RR$, such that $\Prob{\hrv i > b} \le \delta$, yet $I \ge b + C$.

\begin{remark}
  Note that \cite[Th.~3]{Belghazi2018Mutual} claims a confidence bound for mutual information, that together with the approximation result \cite[Lem.~1]{Belghazi2018Mutual} seemingly contradicts our result. However, the confidence bound proved in \cite[Th.~3]{Belghazi2018Mutual} requires strong conditions on the functions\footnote{Here we use the notation of \cite{Belghazi2018Mutual}.} $T_\theta$ and \cite[Lem.~1]{Belghazi2018Mutual} does not necessarily hold under these conditions. Moreover, both approximation results \cite[Lem.~1~and~Lem.~2]{Belghazi2018Mutual} do not hold uniformly for a family of distributions, but implicitly assume a fixed, underlying distribution.
  This is especially evident in \cite[Lem.~2]{Belghazi2018Mutual}, which also seemingly contradicts our result, when assuming that the optimal function $T^*$ is in the family $T_\Theta$. However, this apparent contradiction is resolved by noting that the chosen $N \in \NN$ depends on the underlying, true distribution.
\end{remark}

\subsubsection{One Discrete Random Variable}
\label{sec:one-discrete-random}

In the following we show that a similar result holds if $\rv y$ has a fixed finite alphabet, say $\YYY = \Nto{K}$. 
Again, for every $N \in \NN$, let $\hat i_N \colon \RR^N \times \YYY^N \to \RR$ be an estimator that estimates $I \defas \mutInf{\rv x}{\rv y} \le \log K$ from $\rvt x, \rvt y$.
Note that the result for continuous $\rv y$ cannot carry over unchanged as we have $I \in [0, \log K]$.
We shall assume that $\hat i_N$ is consistent in the sense that $\hrv i_N \to I$ in probability as $N \to \infty$, where we use $\hrv i_N \defas \hat i_N(\rvt x, \rvt y)$.

Let $\rv x \sim \uniform{[0,1]}$ and $\rv w \sim \uniform{\YYY}$ be independent random variables. Fix $\delta > 0$ and by consistency find $N_0$ such that $\Prob{\hat i_N(\rvt x, \rvt w) \ge \delta} \le \frac \delta 2$ for all $N \ge N_0$. 
In the following consider $N \ge N_0$ fixed. 
Fix $M \in \NN$ and $\vt v \in \YYY^M$, and define 
the quantization $\rv z \defas 1 + \lfloor M\rv x\rfloor$. The random variable $\rv y$ is simply $\rv y = v_{\rv z}$.
We use the notation $\hrv i_N^{\vt v}$ to highlight that $\hrv i_N$ depends on the particular choice of $\vt v$ and wish to show that $\Prob{\hrv i_N^{\vt v} \ge \delta} \le \delta$ for at least one $\vt v$.

Assume to the contrary, that $\Prob{\hrv i_N^{\vt v} \ge \delta} > \delta$ for all $\vt v \in \YYY^M$. Let $\EEE = \{\exists i,j \in \Nto{N} : i \neq j, \rv z_i = \rv z_j\}$ be the event that two elements of $\rvt x$ fall in the same ``bin.'' Note that $\rvt z$ is the quantization of $\rvt x$.
For $M$ large enough, we obtain
\begin{align}
  \Prob{\EEE} &= 1 - \frac{M!}{M^N (M-N)!} \\
              & \le 1- \left(\frac{M-N+1}{M}\right)^N \\
              &\le \eps .
\end{align}
Defining $\rvt v \sim \uniform{\YYY^M}$, independent of $\rvt x$, we obtain for $\eps$ small enough
\begin{align}
  \delta &< K^{-M} \sum_{\vt w \in \YYY^M} \Prob{\hrv i_N^{\vt v} \ge \delta} \\
                  &\le \eps + K^{-M} \sum_{\vt v \in \YYY^M} \Pcond{\hrv i_N^{\vt v} \ge \delta}{\EEE\compl} \\
                  &= \eps + \sum_{\vt v \in \YYY^M} \Pcond{\hrv i_N^{\rvt v} \ge \delta}{\EEE\compl, \rvt v = \vt v} \Prob{\rvt v = \vt v} \\
         &= \eps + \Pcond{\hrv i_N^{\rvt v} \ge \delta}{\EEE\compl} \\
         &= \eps + \Pcond{\hat i_N(\rvt x, \rvt v_{\rvt z}) \ge \delta}{\EEE\compl} \\
                           &= \eps + \Pcond{\hat i_N(\rvt x, \rvt w) \ge \delta}{\EEE\compl} \\
                  &\le \eps + \frac{\delta}{2(1-\eps)}
         \le \delta ,
\end{align}
leading to a contradiction.

To summarize, for arbitrary $\delta$ and $N$ large enough, there exists $\vt w$ such that $\Prob{\hrv i_N^{\vt w} \ge \delta} \le \delta$, but clearly $\rv y$ is a deterministic function of $\rv x$ and hence $I = \log K$.
\subsection{Relative Entropy}
\label{sec:kl}

Let $\p{}$ and $\q{}$ be two continuous \glspl{pdf} (\wrt $\lebesgue$) and $\rvt x$, $\rvt y$ be $N$ \iid random variables distributed according to $\p{}$ and $\q{}$, respectively.
For any $N$, let $\hat d_N\colon \RR^N \times \RR^N \to \RR$ be an estimator that estimates $D \defas \DKL{\p{}}{\q{}} < \infty$ from $\rvt x, \rvt y$.
For convenience we use $\hrv d_N \defas \hat d_N(\rvt x, \rvt y)$. Let $\rvt z_1$, $\rvt z_2$ be two independent \iid $N$ vectors with components uniformly distributed on $[-1,0]$.
For an arbitrary $\delta > 0$ we can find $c \in \RR$ such that $\Prob{\hat d_N(\rvt z_1, \rvt z_2) \le c} \ge 1-\frac{\delta}{2}$.
Consider $C,\delta > 0$ and an arbitrary $N \in \NN$.

Define the \glspl{pdf}
\begin{align}
    \p{x} &= \ee{-a} \ind{[0,1)}{x} + (1-\ee{-a}) \ind{[-1,0)}{x}, \text{ and} \\
    \q{x} &= \ee{-a-b} \ind{[0,1)}{x} + (1-\ee{-a-b}) \ind{[-1,0)}{x}
\end{align}
for $a,b \in \RR_+$.

With $b = k\ee{a}$, where $k \in \RR_+$, we have $D = D(a,k)$, with the function
\begin{align}
  D(a,k)                &= \ee{-a} b + (1-\ee{-a}) \log\frac{1-\ee{-a}}{1-\ee{-a-b}} \\
           &= k + (1-\ee{-a}) \log\frac{1-\ee{-a}}{1-\ee{-a-k\ee{a}}} \\
                      &\ge k - \ee{-1} .
\end{align}

Let $\EEE_1 = \{\rvt x < \vt 0\}$ and $\EEE_2 = \{\rvt y < \vt 0\}$ be the events that every component of $\rvt x$ and $\rvt y$, respectively, is negative. Then,
\begin{align}
    \Prob{\EEE_1\compl \cup \EEE_2\compl} &\le 2N\ee{-a} ,\\
  \Prob{\EEE_1 \cap \EEE_2} &\ge 1 - 2N\ee{-a} .
\end{align}
Choose $a \ge \log \frac{4N}{\delta}$ and $k \ge c + C + \ee{-1}$ such that $2N\ee{-a} \le \frac{\delta}{2}$ and $D \ge c + C$.
We can now bound the probability
\begin{align}
  \Prob{\hrv d_N \le c} &= \Pcond{\hrv d_N \le c}{\EEE_1 \cap \EEE_2} \Prob{\EEE_1 \cap \EEE_2} \eqnl
  & \quad + \Pcond{\hrv d_N \le c}{\EEE_1\compl \cup \EEE_2\compl} \Prob{\EEE_1\compl \cup \EEE_2\compl} \\
                        &\ge \Pcond{\hrv d_N \le c}{\EEE_1 \cap \EEE_2} (1-\frac{\delta}{2}) \\
                        &= \Prob{\hat d_N(\rvt z_1, \rvt z_2) \le c} (1-\frac{\delta}{2}) \\
                        &\ge 1 - \delta .
\end{align}
In summary, for an estimator $\hat d_N$ and any $\delta, C > 0$ and $N \in \NN$, we can find distributions $\p{}$, $\q{}$ and $c \in \RR$, such that $\Prob[][normal]{\hrv d_N > c} \le \delta$, even though $D \ge c + C$.

\section{Discussion and Perspectives}
\label{sec:discussion}

We showed that under mild assumptions on the family of allowed distributions, differential entropy cannot be reliably estimated solely based on samples, no matter how many samples are available.
In particular, as first noted in \cite{Donoho1988One} no non-trivial bound or estimate of an information measure can be obtained based only on samples.
External information about the regularity of the underlying probability distribution needs to be taken into account. However, such regularity assumptions are not subject to empirical verification and thus, the existence of statistical guarantees for an empirical estimate cannot be empirically tested. This shows that researchers should take great care when approximating or bounding information measures, and specifically explore the necessary assumptions for the underlying distribution.

Regarding the use of information measures in machine learning, we note that our results apply to all estimators of information measures.
In particular, empirical versions of variational bounds cannot provide estimates of information measures with high reliability in general.

It would be interesting to investigate the type of assumptions on the underlying distributions that may hold in typical machine learning setups. However, as pointed out previously, these properties cannot be deduced from data, but must result from the model under consideration.
In a related note, it might be interesting if the confidence bounds for differential entropy estimation under bounded support and Lipschitz condition from \cref{sec:lipschitz} carry over to empirical versions of variational bounds.
Extensions of these results to other information measures, \eg, Rényi entropy, Rényi divergences, or $f$-divergences, could also be of particular interest for future work.

\acknowledgments

\begin{appendix}
  \section{Proof of Theorem \ref{thm:lipschitz}}
  \label{sec:proof:lipschitz}

  We shall first introduce auxiliary random variables which are depicted in \cref{fig:random-variables}. Let $\wrvt x$ be the element-wise discretization $\wrvt x = \Delta_M(\rvt x)$.
  The continuous random vector $\orvt x = \wrvt x + \uniform{[0,\frac 1M]^K}$ is obtained by adding independent uniform random noise. Let $\q{}$ be the \gls{pdf} of $\orvt x$.
  It is straightforward to see that $\wrvt y$ is distributed according to the empirical distribution of $N$ \iid copies of $\wrvt x$.
  Also note that $\ent[normal]{\wrvt x} = \dent[normal]{\orvt x} + K\log M$.
  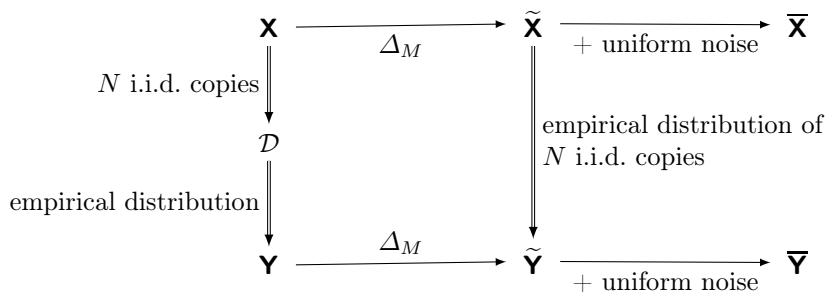
\begin{figure*}[tbh]
    \centering
    \begin{tikzpicture}
  \matrix (m) [matrix of math nodes,row sep=3em,column sep=8em,minimum width=2em]
  {
    \rvt x & \wrvt x & \orvt x \\
    \DDD \\
    \rvt y & \wrvt y & \orvt y \\
  };
  \path[-latex]
  (m-1-1) edge [double] node [left] {$N$ \iid copies} (m-2-1)
  edge node [below] {$\Delta_M$} (m-1-2)
  (m-1-2) edge node [below] {+ uniform noise} (m-1-3)
  (m-2-1) edge [double] node [left] {empirical distribution} (m-3-1)
  (m-3-1) edge node [above] {$\Delta_M$} node [below] {} (m-3-2)
  (m-3-2) edge node [below] {+ uniform noise} (m-3-3)
  (m-1-2) edge [double] node [right,text width=4cm] {empirical distribution of $N$ \iid copies} (m-3-2);
\end{tikzpicture}
    \caption{Connections between all involved random variables.}
    \label{fig:random-variables}
  \end{figure*}

  In order to prove \cref{thm:lipschitz} we use the triangle inequality twice to obtain
  \begin{align}
    & \big|\hat h(\DDD) - \dent{\rvt x} \big| \eqnl
    & \qquad = \left|\dent[normal]{\orvt y} - \dent{\rvt x} \right|  \\
    & \qquad \le \left|\dent[normal]{\orvt y} - \dent[normal]{\orvt x} \right| + \left| \dent[normal]{\orvt x} - \dent{\rvt x} \right| \\
    & \qquad = \big|\ent[normal]{\wrvt y} - \ent[normal]{\wrvt x} \big|
      + \left| \dent[normal]{\orvt x} - \dent{\rvt x} \right| \\
    & \qquad \le \big|\ent[normal]{\wrvt y} - \Exp[][normal]{\ent[normal]{\wrvt y}} \big|
      + \big|\Exp[][normal]{\ent[normal]{\wrvt y}} - \ent[normal]{\wrvt x} \big| \eqnl
    & \qquad \qquad + \left| \dent[normal]{\orvt x} - \dent{\rvt x} \right| , \label{eq:lipschitz:proof:2} 
  \end{align}
  noting that $\dent[normal]{\orvt y} - \ent[normal]{\wrvt y} = \dent[normal]{\orvt x} - \ent[normal]{\wrvt x} = -K\log M$. Note that $\ent[normal]{\wrvt y}$ is a random quantity that depends on $\DDD$.
  We thus split the bound in three terms, where the first term in \cref{eq:lipschitz:proof:2} is variance-like and the second and third terms constitute the bias.
  In the remainder of this \lcnamecref{sec:proof:lipschitz}, we will complete the proof by showing that all three terms in \cref{eq:lipschitz:proof:2} can be bounded as follows.
  First, we have
  \begin{align}
    \big|\Exp[][normal]{\ent[normal]{\wrvt y}} - \ent[normal]{\wrvt x} \big| &\le \log\left( 1 + \frac{M^K-1}{N} \right) , \text{ and} \label{eq:entropy_estimate:bias1}\\
    \left| \dent[normal]{\orvt x} - \dent{\rvt x} \right| &\le \frac{LK}{2M} \log(M \eta(K,L)). \label{eq:entropy_estimate:bias2}
  \end{align}
  And with probability greater than $1-\delta$ we also have
  \begin{align}
    \big|\ent[normal]{\wrvt y} - \Exp[][normal]{\ent[normal]{\wrvt y}} \big| &\le \sqrt{\frac{2}{N} \log\frac{2}{\delta}} \log N . \label{eq:entropy_estimate:var}
  \end{align}
  As $\wrvt y$ is distributed according to the empirical distribution of $N$ \iid copies of $\wrvt x$, on an alphabet of size $M^K$, the inequalities \cref{eq:entropy_estimate:bias1,eq:entropy_estimate:var} follow directly from the following well-known \lcnamecref{lem:entropy_estimate}, concerning the estimation of (discrete) Shannon entropy.   \begin{lemma}[{\cite[eq.~(3.4),~and~Prop.~1]{Paninski2003Estimation} and \cite[Remark~iii,~p.168]{Antos2001Convergence}}]
    \label{lem:entropy_estimate}
    Let $\rv z$ be a random variable on $\Nto{M}$ and $\hrv z$ distributed according to the empirical measure of $N$ \iid copies of $\rv z$. We then have $\big|\ent[normal]{\hrv z} - \ent[normal]{\rv z}\big| \le \big|\ent[normal]{\rv z} - \Exp[][normal]{\ent[normal]{\hrv z}}\big| + \big|\ent[normal]{\hrv z} - \Exp[][normal]{\ent[normal]{\hrv z}}\big|$, where
    \begin{align}
      \big|\ent[normal]{\rv z} - \Exp[][normal]{\ent[normal]{\hrv z}}\big| \le \log\left(1 + \frac{M-1}{N}\right),
    \end{align}
    and for any $\delta \in (0,1]$, with probability greater than $1-\delta$,
    \begin{align}
      \big|\ent[normal]{\hrv z} - \Exp[][normal]{\ent[normal]{\hrv z}}\big| \le \sqrt{\frac{2}{N}\log\frac{2}{\delta}} \log N .
    \end{align}
  \end{lemma}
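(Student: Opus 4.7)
My plan is to treat the three claims of the lemma in turn. The first inequality is a triangle inequality: inserting $\pm\Exp[][normal]{\ent[normal]{\hrv z}}$ inside $|\ent[normal]{\hrv z} - \ent[normal]{\rv z}|$ and applying $|a+b|\le|a|+|b|$ yields the decomposition immediately.

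For the bias bound, I would first establish the identity $\ent[normal]{\rv z} - \Exp[][normal]{\ent[normal]{\hrv z}} = \Exp[][normal]{\DKL{\hat p}{p}}$, where $p$ and $\hat p$ denote the distributions of $\rv z$ and $\hrv z$ respectively. This follows by expanding $-\hat p_i \log \hat p_i = -\hat p_i \log p_i - \hat p_i \log(\hat p_i/p_i)$, taking expectations, and using $\Exp[][normal]{\hat p_i} = p_i$; as a by-product, non-negativity of KL divergence shows that the bias itself is non-negative. To bound it from above, I would apply Jensen's inequality twice. First, viewing $\hat p$ as a probability measure on $\Nto{M}$ and using concavity of $\log$,
\begin{align}
\DKL{\hat p}{p} = \sum_i \hat p_i \log\frac{\hat p_i}{p_i} \le \log \sum_i \frac{\hat p_i^2}{p_i} .
\end{align}
Second, taking expectations over the samples and again invoking concavity of $\log$ gives $\Exp[][normal]{\DKL{\hat p}{p}} \le \log \Exp[][normal]{\sum_i \hat p_i^2/p_i}$. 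The binomial moment $\Exp[][normal]{\hat p_i^2} = p_i^2 + p_i(1-p_i)/N$ then reduces the inner expectation to $1 + (M-1)/N$, yielding the claimed bound.

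The concentration claim will follow from McDiarmid's bounded-differences inequality applied to $\ent[normal]{\hrv z}$, viewed as a function of the $N$ \iid samples. Modifying one sample shifts two entries of the empirical distribution by $\pm 1/N$, and for $f(x) = -x\log x$ the elementary inequality $|f(k/N) - f((k\pm 1)/N)| \le (\log N)/N$ holds for all integer $0 \le k \le N$, the extreme case being $f(1/N) - f(0) = (\log N)/N$. Hence each single-sample perturbation changes $\ent[normal]{\hrv z}$ by at most $2(\log N)/N$, and McDiarmid then yields $\Prob{|\ent[normal]{\hrv z} - \Exp[][normal]{\ent[normal]{\hrv z}}| > t} \le 2\exp(-Nt^2/(2\log^2 N))$. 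Equating the right-hand side to $\delta$ and solving for $t$ produces the stated confidence bound.

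I expect the main obstacle to be the double-Jensen step producing the clean constant $\log(1+(M-1)/N)$: the trick is to use $\hat p$ itself as the averaging measure in the first Jensen application so that $\sum_i \hat p_i^2/p_i$ emerges naturally, after which a single binomial moment computation closes the argument. The triangle-inequality step is trivial, and the concentration step is standard once the per-sample sensitivity of $\ent[normal]{\hrv z}$ has been correctly identified.
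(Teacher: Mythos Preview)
Your proposal is correct. The paper itself does not prove this lemma at all; it simply quotes the result and cites \cite[eq.~(3.4) and Prop.~1]{Paninski2003Estimation} and \cite[Remark~iii, p.~168]{Antos2001Convergence}. What you have written is essentially a reconstruction of the arguments in those references: the bias identity $\ent[normal]{\rv z}-\Exp[][normal]{\ent[normal]{\hrv z}}=\Exp[][normal]{\DKL{\hat p}{p}}$ followed by the double Jensen step is precisely Paninski's derivation of his eq.~(3.4), and the concentration bound via McDiarmid with per-coordinate sensitivity $2(\log N)/N$ is the Antos--Kontoyiannis argument. One cosmetic point: your sensitivity claim ``for all integer $0\le k\le N$'' should really read $0\le k\le N-1$ for the $+1$ direction (and $1\le k\le N$ for the $-1$ direction), but the substance is fine and the resulting McDiarmid bound matches the stated constant exactly.
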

                
  In order to show \cref{eq:entropy_estimate:bias2}, we will first obtain some preliminary results and then conclude the proof in \cref{lem:discretizaion_bound}. We start by bounding the difference between $\p{}$ and the approximation $\q{}$ using the following auxiliary results.
  \begin{lemma}
    \label{lem:f_difference_bound}
    Let $f\colon [\vt 0, \vt A] \to \RR$ be an arbitrary $L$-Lipschitz continuous function and assume $f(\vt y) = 0$ for some\footnote{We use the notation $[\vt 0, \vt A] = \{\vt x \in \RR^K: 0 \le x_k \le A_k, k \in \Nto{K}\}$.} $\vt y \in [\vt 0, \vt A]$, then
    \begin{align}
      \int_{[\vt 0,\vt A]} |f(\vt x)| \,\lebesgue^K(\dd \vt x) \le \frac L2 \left(\prod_{k=1}^K A_{k} \right) \left(\sum_{k=1}^K  A_k \right) .
    \end{align}
    In particular, for $A_k = \eps$, $k \in \Nto{K}$, we have $ \int_{[0,\eps]^K} |f(\vt x)| \,\lebesgue^K(\dd \vt x) \le \eps^{K+1} \frac{LK}{2}$.
  \end{lemma}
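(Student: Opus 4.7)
The plan is to exploit the Lipschitz hypothesis and the vanishing of $f$ at $\vt y$ to pointwise bound $|f(\vt x)|$, then integrate coordinatewise. Concretely, since $\RR^K$ is equipped with the $\ell_1$-norm and $f(\vt y) = 0$, Lipschitz continuity yields, for every $\vt x \in [\vt 0, \vt A]$,
\begin{align}
  |f(\vt x)| = |f(\vt x) - f(\vt y)| \le L \norm{\vt x - \vt y} = L \sum_{k=1}^K |x_k - y_k|.
\end{align}

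Next, I would integrate this inequality over $[\vt 0, \vt A]$ and swap sum and integral, so that
\begin{align}
  \int_{[\vt 0,\vt A]} |f(\vt x)|\,\lebesgue^K(\dd\vt x) \le L \sum_{k=1}^K \left(\prod_{j\neq k} A_j\right) \int_0^{A_k} |x_k - y_k|\, \dd x_k.
\end{align}
The inner one-dimensional integral evaluates explicitly to $\tfrac12\bigl((A_k-y_k)^2 + y_k^2\bigr)$, which as a function of $y_k \in [0,A_k]$ is a convex parabola attaining its maximum at the endpoints $y_k \in \{0, A_k\}$ with value $A_k^2/2$.

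Plugging this worst-case bound into each summand and factoring $\prod_k A_k$ out of the sum yields
\begin{align}
  \int_{[\vt 0,\vt A]} |f(\vt x)|\,\lebesgue^K(\dd\vt x) \le L \sum_{k=1}^K \frac{A_k^2}{2}\prod_{j\neq k} A_j = \frac{L}{2}\left(\prod_{k=1}^K A_k\right)\left(\sum_{k=1}^K A_k\right),
\end{align}
which is the claimed inequality. The special case $A_k = \eps$ for all $k$ then follows by direct substitution, giving $\tfrac{L}{2}\eps^K \cdot K\eps = \eps^{K+1}\tfrac{LK}{2}$. I do not anticipate any genuine obstacle; the only minor point to be careful about is that the worst case in the one-dimensional integral occurs at the boundary of $[0,A_k]$, so the bound does not depend on the particular location of $\vt y$.
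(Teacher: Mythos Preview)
Your proof is correct and follows essentially the same approach as the paper: both use the Lipschitz bound $|f(\vt x)| \le L\norm{\vt x - \vt y}_1$, integrate coordinatewise, and then bound the resulting one-dimensional integral $\int_0^{A_k}|x_k - y_k|\,\dd x_k = \tfrac12\bigl((A_k-y_k)^2 + y_k^2\bigr)$ by its worst case $A_k^2/2$. The paper writes the intermediate expression slightly differently (as $\tfrac12(A_k^2 - 2y_k(A_k - y_k))/A_k$ after factoring out $\prod_k A_k$), but this is algebraically identical to yours.
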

  \begin{proof}
    For $\vt x \in [\vt 0, \vt A]$ we have $|f(\vt x)| = |f(\vt x) - f(\vt y)| \le L\norm{\vt x - \vt y}$ and hence 
    \begin{align}
      & \int_{[\vt 0,\vt A]} |f(\vt x)| \,\lebesgue^K(\dd \vt x) \eqnl
      & \qquad \le L\int_{[\vt 0,\vt A]} \norm{\vt x - \vt y} \,\lebesgue^K(\dd \vt x) \eqnl
      & \qquad = L \sum_{k=1}^K \int_{[\vt 0,\vt A]} |x_k - y_k| \,\lebesgue^K(\dd \vt x) \\
                                          & \qquad = \frac L2 \left(\prod_{k=1}^K A_{k} \right) \left(\sum_{k=1}^K  \frac{A_k^2 - 2 y_k (A_k - y_k)}{A_k} \right) \\
      & \qquad \le \frac L2 \left(\prod_{k=1}^K A_{k} \right) \left(\sum_{k=1}^K  A_k \right) . \qedhere
    \end{align}
  \end{proof}
  
  \begin{lemma}
    \label{lem:difference_bound}
    For an $L$-Lipschitz continuous \gls{pdf} $\p{}$ on $\XXX$ and $\q{}$ the \gls{pdf} of $\orvt x$, as defined above, we have $|\p{\vt x} - \q{\vt x}| \le \frac{LK}{2M}$ for every $\vt x \in \XXX$.
  \end{lemma}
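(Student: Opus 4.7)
The plan is to first identify the density $\q{}$ of $\orvt x$ explicitly: since $\wrvt x = \Delta_M(\rvt x)$ takes values on the lattice $\frac{1}{M}\ZZ^K$ and independent uniform noise on $[0,\frac 1M]^K$ is added, the density $\q{\vt x}$ at any point $\vt x\in \XXX$ equals $M^K$ times the probability that $\rvt x$ falls in the bin $B(\vt x) \defas [\Delta_M(\vt x), \Delta_M(\vt x)+\frac 1M \vt 1]$ containing $\vt x$. That is,
\begin{align}
\q{\vt x} = M^K \int_{B(\vt x)} \p{\vt y}\,\dd \vt y.
\end{align}
Since $\vt x$ itself lies in $B(\vt x)$, which has volume $M^{-K}$, I can write $\p{\vt x} = M^K \int_{B(\vt x)} \p{\vt x}\,\dd \vt y$ and obtain
\begin{align}
\p{\vt x} - \q{\vt x} = M^K \int_{B(\vt x)} \bigl(\p{\vt x} - \p{\vt y}\bigr)\,\dd \vt y.
\end{align}

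The second step is to bound the $L^1$ norm of the integrand on $B(\vt x)$. Define $f(\vt y) \defas \p{\vt y} - \p{\vt x}$ on $B(\vt x)$; by assumption $\p{}$ is $L$-Lipschitz on $\RR^K$, so $f$ inherits the same Lipschitz constant, and $f(\vt x) = 0$ with $\vt x \in B(\vt x)$. After translating $B(\vt x)$ by $-\Delta_M(\vt x)$ to map it to $[\vt 0, \frac 1M \vt 1]$, I apply \cref{lem:f_difference_bound} with $A_k = \frac 1M$ for each $k\in\Nto K$, which yields
\begin{align}
\int_{B(\vt x)} |\p{\vt x} - \p{\vt y}|\,\dd \vt y \le \frac{L}{2} \cdot \frac{1}{M^K} \cdot \frac{K}{M} = \frac{LK}{2M^{K+1}}.
\end{align}

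Multiplying by $M^K$ gives the stated bound $|\p{\vt x} - \q{\vt x}| \le \frac{LK}{2M}$. There is no real obstacle here: the whole argument is a two-line computation once the bin representation of $\q{}$ is written down, and \cref{lem:f_difference_bound} (already proved immediately above) does the heavy lifting of the integral bound. The only point that requires care is the case where $\vt x$ lies on the boundary of the support $\XXX = [0,1]^K$ (so that $B(\vt x)$ extends outside $\XXX$); this is not problematic because $\p{}$ is Lipschitz on all of $\RR^K$ (it vanishes off $\XXX$ in the sense of being extended) and the identity for $\q{\vt x}$ remains valid because $\p{}$ is zero on $B(\vt x)\setminus \XXX$.
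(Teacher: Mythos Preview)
Your proposal is correct and follows essentially the same approach as the paper: both express $\q{\vt x}$ as the average of $\p{}$ over the bin $\Delta_M^{-1}(\Delta_M(\vt x))$ (your $B(\vt x)$), rewrite $\p{\vt x}-\q{\vt x}$ as $M^K$ times the integral of $\p{\vt x}-\p{\vt y}$ over that bin, and then invoke \cref{lem:f_difference_bound} with $A_k=\frac1M$. Your remark about boundary points is an extra bit of care the paper does not spell out, but it does not change the argument.
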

  \begin{proof}
    Let $\vt x \in \XXX$ and $\wvt x = \Delta_M(\vt x)$. The function $\q{}$ is constant on $\Delta_M^{-1}(\wvt x)$ and given by $\q{\vt x'} = \lebesgue^K(\Delta_M^{-1}(\wvt x))^{-1} \int_{\Delta_M^{-1}(\wvt x)} \p{} \,\dd\lebesgue^K$ for all $\vt x' \in \Delta_M^{-1}(\wvt x)$, where $\lebesgue^K(\Delta_M^{-1}(\wvt x)) = M^{-K}$. Thus, since $\vt x \in \Delta_M^{-1}(\wvt x)$, we obtain
    \begin{align}
      M^{-K} |\q{\vt x} - \p{\vt x}| &= \left| \int_{\Delta_M^{-1}(\wvt x)} \p{\vt x'} - \p{\vt x} \,\lebesgue^K(\dd \vt x') \right| \\
                                     &\le \int_{\Delta_M^{-1}(\wvt x)} |\p{\vt x'} - \p{\vt x}| \,\lebesgue^K(\dd \vt x') \\
                                     &\le M^{-K-1} \frac{LK}{2} \label{eq:difference_bound_apply}, 
    \end{align}
    where we applied \cref{lem:f_difference_bound} to $\vt x' \mapsto \p{\vt x'} - \p{\vt x}$ in \cref{eq:difference_bound_apply}.
  \end{proof}

  \begin{lemma}
    \label{lem:upper_bound}
    If $\p{}$ is an $L$-Lipschitz continuous \gls{pdf} on $\RR^K$, then $\p{} \le \left(\frac{L^K (K+1)!}{2^K}\right)^{\frac{1}{K+1}}$.
  \end{lemma}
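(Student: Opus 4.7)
The plan is to fix an arbitrary point $\vt x_0 \in \RR^K$, set $M \defas \p{\vt x_0}$, and derive an upper bound on $M$ that is independent of $\vt x_0$. By the $L$-Lipschitz assumption (with respect to the $\ell_1$-norm), for every $\vt x \in \RR^K$ we have the pointwise lower bound
\begin{align}
\p{\vt x} \ge \max\bigl(0,\; M - L \norm{\vt x - \vt x_0}\bigr),
\end{align}
and in particular $\p{\vt x}$ dominates a pyramidal function supported on the $\ell_1$-ball $\Ball \defas \{\vt x : \norm{\vt x - \vt x_0} \le M/L\}$. Integrating against $\lebesgue^K$ and using that $\p{}$ integrates to one gives
\begin{align}
1 \ge \int_{\Ball} \bigl(M - L\norm{\vt x - \vt x_0}\bigr) \,\lebesgue^K(\dd \vt x).
\end{align}

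The main computation is to evaluate this integral. I would translate so that $\vt x_0 = \vt 0$, change variables $\vt y = L\vt x/M$ to reduce to the unit $\ell_1$-ball, and then apply the coarea-type formula: since the volume of the $\ell_1$-ball of radius $r$ in $\RR^K$ equals $\frac{2^K r^K}{K!}$, the surface measure of the $\ell_1$-sphere of radius $r$ is $\frac{2^K r^{K-1}}{(K-1)!}$. This yields
\begin{align}
\int_{\norm{\vt y} \le 1} (1 - \norm{\vt y}) \,\lebesgue^K(\dd \vt y) = \int_0^1 (1-r) \frac{2^K r^{K-1}}{(K-1)!} \,\dd r = \frac{2^K}{(K+1)!},
\end{align}
after a routine beta-function calculation.

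Undoing the change of variables multiplies by a Jacobian factor $(M/L)^K$ and an additional factor $M/L$ coming from the integrand, so we arrive at
\begin{align}
1 \ge \frac{2^K M^{K+1}}{L^K (K+1)!},
\end{align}
which rearranges to $M \le \left(\frac{L^K (K+1)!}{2^K}\right)^{\frac{1}{K+1}}$. Since $\vt x_0$ was arbitrary, this proves the claim. The only subtlety is verifying the volume/surface formulas for the $\ell_1$-ball; everything else is a direct manipulation.
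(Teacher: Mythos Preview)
Your proof is correct and follows essentially the same route as the paper: both fix a point, lower-bound $\p{}$ on the $\ell_1$-ball of radius $\p{\vt x_0}/L$ by the pyramidal function $M - L\norm{\cdot - \vt x_0}$, and integrate---the paper writes the integral as $\p{\vt x_0}\cdot \lebesgue^K(\Ball) - L\int_\Ball \norm{\vt x - \vt x_0}\,\lebesgue^K(\dd\vt x)$ instead of using your change of variables, but the computation and the resulting bound are identical. (One minor slip in your write-up: after the substitution $\vt y = L\vt x/M$ the integrand $M - L\norm{\vt x}$ becomes $M(1-\norm{\vt y})$, so the extra factor is $M$, not $M/L$; your final inequality $1 \ge \tfrac{2^K M^{K+1}}{L^K (K+1)!}$ is nonetheless correct.)
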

  \begin{proof}
    Let $\vt x \in \RR^K$ and define the ball $\Ball_{\vt x}(\frac{\p{\vt x}}{L}) = \{\vt x' \in \RR^K : \norm{\vt x' - \vt x} \le \frac{\p{\vt x}}{L}\}$ with radius $\frac{\p{\vt x}}{L}$, centered at $\vt x$. We then have $\lebesgue^K(\Ball_{\vt x}(\frac{\p{\vt x}}{L})) = \frac{2^K \p{\vt x}^{K}}{L^{K} K!}$ and hence,
    \begin{align}
      1 &\ge \int_{\Ball_{\vt x}(\frac{\p{\vt x}}{L})} \p{\vt x} \,\lebesgue^K(\dd \vt x ) \\
        &= \frac{2^K \p{\vt x}^{K+1}}{L^{K} K!} - \int_{\Ball_{\vt x}(\frac{\p{\vt x}}{L})} \p{\vt x} - \p{\vt x'} \,\lebesgue^K(\dd \vt x') \\
        &\ge \frac{2^K \p{\vt x}^{K+1}}{L^{K} K!} - L \int_{\Ball_{\vt x}(\frac{\p{\vt x}}{L})} \norm{\vt x - \vt x'} \,\lebesgue^K(\dd\vt x')  \label{eq:upper_bound_use_Lipschitz} \\
                                        &= \frac{2^K \p{\vt x}^{K+1}}{L^{K}} \frac{1}{(K+1)!} ,
    \end{align}
    where the fact that $\p{\vt x} - \p{\vt x'} \le L \norm{\vt x - \vt x'}$ for all $\vt x' \in \RR^K$ is used in~\cref{eq:upper_bound_use_Lipschitz}.
  \end{proof}

  Using the previous lemmas to bound the distance between $\p{}$ and $\q{}$, the following two results will allow us to bound the difference of the differential entropies.
  \begin{lemma}
    \label{lem:diff_info_bound}
    For $x \in [0,1]$, $y \ge 0$, and $a \defas |x-y| \le \alpha \approx \pgfmathprintnumber[fixed,precision=3]{\myalpha}$ we have
    \begin{align}
      \label{eq:diff_info_bound}
      |x\log x - y \log y| \le - a \log a .    
    \end{align}
  \end{lemma}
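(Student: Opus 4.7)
The plan is to reduce the inequality to a one-dimensional monotonicity analysis and then verify an auxiliary elementary inequality. Write $f(t) := -t\log t$ (with $f(0) := 0$), so the claim becomes $|f(x) - f(y)| \le f(a)$. Setting $u := \min(x,y)$ and $v := \max(x,y) = u + a$, the hypothesis $x \in [0,1]$ forces $u \in [0,1]$ in all cases, since $u \le x \le 1$.

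Next I would study the one-variable function $F(u) := f(u+a) - f(u)$ on $[0,1]$. A direct computation gives
\begin{align}
F'(u) = f'(u+a) - f'(u) = \log\tfrac{u}{u+a} < 0,
\end{align}
so $F$ is strictly decreasing with $F(0) = f(a) = -a\log a$ and $F(1) = -(1+a)\log(1+a)$. Consequently
\begin{align}
|F(u)| \le \max\bigl(-a\log a,\ (1+a)\log(1+a)\bigr) \quad \text{for all } u \in [0,1],
\end{align}
and the claim reduces to showing $(1+a)\log(1+a) \le -a\log a$ whenever $a \le \alpha$.

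For this last step, I would introduce $\phi(a) := (1+a)\log(1+a) + a\log a$ and show $\phi(a) \le 0$ on $[0,\alpha]$. Since $\phi(0) = 0$, it suffices to prove $\phi'(a) \le 0$ on $[0,\alpha]$. Computing, $\phi'(a) = \log(a(1+a)) + 2$, which is non-positive precisely when $a(1+a) \le \ee{-2}$. Solving the quadratic $a^2 + a - \ee{-2} = 0$ for the positive root yields exactly $\alpha = (-\ee{} + \sqrt{\ee{2}+4})/(2\ee{})$. Hence $\phi' \le 0$ on $[0,\alpha]$, so $\phi \le 0$ there, concluding the argument.

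No substantive obstacle arises: the key insight is that monotonicity of $F$ in $u$ collapses the two-variable supremum to its two endpoint values, after which the definition of $\alpha$ is exactly what is needed to guarantee the auxiliary one-variable inequality via a clean derivative argument. The sharp threshold would be the positive root of $\phi$, but it has no closed form, so $\alpha$ (the minimizer of $\phi$) is used as a convenient closed-form sufficient threshold.
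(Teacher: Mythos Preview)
Your proof is correct and somewhat cleaner than the paper's. Both arguments exploit the monotonicity of $u \mapsto f(u+a)-f(u)$, but the paper splits into two cases (according to whether $y \le \ee{-1}$): in the first case it uses this monotonicity on $[0,\ee{-1}-a]$ just as you do, while in the second it instead invokes the mean value theorem to compare $|f'(x_0)|$ for some $x_0 \in (x,y) \subset (\alpha,1+\alpha)$ against $f'(a_0)$ for some $a_0 \in (0,a) \subset (0,\alpha)$, the comparison going through precisely because $\alpha(1+\alpha)=\ee{-2}$. Your approach avoids the case split entirely by pushing the monotonicity to all of $[0,1]$ and reducing to the endpoint inequality $(1+a)\log(1+a) \le -a\log a$; your derivative computation for $\phi$ is exactly where $\alpha(1+\alpha)=\ee{-2}$ enters, so the two proofs ultimately use the same defining property of $\alpha$, just at different stages. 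Your route has the bonus of making transparent why $\alpha$ is merely sufficient (it is the minimizer of $\phi$, not its positive root).
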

  \begin{proof}
    In the following, we assume w.l.o.g.\ that $x \le y = x + a$.
    If $x \le y \le \ee{-1}$, then $|x \log x - (x + a) \log (x + a)| = x \log x - (x + a) \log (x + a)$ is monotonically decreasing in $x$ and thus maximal at $x=0$ and hence, \cref{eq:diff_info_bound} follows.
    If, on the other hand, $y \ge \ee{-1}$, then necessarily $\alpha \le \ee{-1} - \alpha \le x \le y \le 1+\alpha$.
    Define the function $f(x) \defas -x\log x$, $x > 0$ and $f(0) \defas 0$. Note that by the mean value theorem there are $a_0 \in (0,a)$ and $x_0 \in (x,y)$ such that $|f(0) - f(a)| = f(a) = a f'(a_0)$ and $|f(x) - f(y)| = a |f'(x_0)|$. Inequality \cref{eq:diff_info_bound} then follows by observing that $|f'(x_0)| \le f'(a_0)$ whenever $a_0 \in (0,\alpha)$ and $x_0 \in (\alpha, 1+\alpha)$.
  \end{proof}

  \begin{lemma}
    \label{lem:discretizaion_bound1}
    Let $\p{}$ and $\q{}$ be two \glspl{pdf} supported on $\XXX$ with finite differential entropies. Assume that for all $\vt x \in \XXX$ we have $|\p{\vt x} - \q{\vt x}| \le \eps$ and $0 \le \p{\vt x} \le A$, and that $\frac{\eps}{A} \le \alpha$ holds. Then,
    \begin{align}
      |\dent{\p{}} - \dent{\q{}}| \le \eps \log\frac{A}{\eps} . \label{eq:discretizaion_bound1}
    \end{align}
  \end{lemma}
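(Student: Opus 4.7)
The plan is to scale both densities by $1/A$ so that \cref{lem:diff_info_bound} applies pointwise, and then to exploit the conservation of mass $\int(\p{}-\q{})\,\dd\vt x=0$ to cancel the spurious $\log A$ term that the rescaling introduces.

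First, I would set $u(\vt x)\defas\p{\vt x}/A\in[0,1]$ and $v(\vt x)\defas\q{\vt x}/A\ge 0$, which guarantees $|u-v|=|\p{}-\q{}|/A\le\eps/A\le\alpha$ pointwise and matches the hypotheses of \cref{lem:diff_info_bound}. Expanding $\p{\vt x}\log\p{\vt x}=A u(\vt x)\log A + A u(\vt x)\log u(\vt x)$, and the analogous identity for $\q{}$, the log-density differences split into a $\log A$-piece and a rescaled piece. The former integrates to zero because both $\p{}$ and $\q{}$ have total mass $1$, leaving
\begin{align}
|\dent{\p{}}-\dent{\q{}}| \le A\int_{\XXX}|v\log v-u\log u|\,\dd\vt x .
\end{align}

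Next, I would apply \cref{lem:diff_info_bound} pointwise to bound the integrand by $-|u-v|\log|u-v|$. Because $a\mapsto-a\log a$ is monotonically increasing on $[0,\ee{-1}]$ and $\eps/A\le\alpha<\ee{-1}$, the pointwise bound $|u-v|\le\eps/A$ tightens this to $\frac{\eps}{A}\log\frac{A}{\eps}$ at every $\vt x$. Integrating over $\XXX=[0,1]^K$ (whose Lebesgue measure equals one, inherited from the setting of \cref{sec:lipschitz}) and multiplying by $A$ yields exactly the stated bound $\eps\log(A/\eps)$.

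The one thing that needs care is the cancellation of the $\log A$ term, since without exploiting $\int(\p{}-\q{})\,\dd\vt x=0$ one would pick up an uncontrolled $\eps|\log A|$ contribution that cannot be absorbed into the target. Beyond this bookkeeping, no step poses a genuine obstacle: the singular behaviour of $x\mapsto -x\log x$ near $0$, which rules out a naive Lipschitz-type estimate, is fully handled by \cref{lem:diff_info_bound}, and the monotonicity of $a\mapsto -a\log a$ on $[0,\alpha]$ replaces the pointwise fluctuation of $|u-v|$ by the uniform bound $\eps/A$ at no cost.
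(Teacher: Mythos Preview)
Your proof is correct and follows essentially the same strategy as the paper: rescale so that the density values lie in $[0,1]$, apply \cref{lem:diff_info_bound} pointwise, use the monotonicity of $a\mapsto -a\log a$ on $[0,\alpha]$ to replace $|u-v|$ by $\eps/A$, and integrate over $\XXX=[0,1]^K$.

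The only technical difference is in how the rescaling is packaged. The paper rescales the domain and range simultaneously, setting $\pp{\vt x}\defas A^{-1}\p{A^{-1/K}\vt x}$ and $\qq{\vt x}\defas A^{-1}\q{A^{-1/K}\vt x}$ on $[0,A^{1/K}]^K$; these are again \glspl{pdf}, so $\dent{\pp{}}-\dent{\qq{}}=\dent{\p{}}-\dent{\q{}}$ holds automatically and no explicit $\log A$ cancellation is needed. You instead rescale only the range and then invoke $\int(\p{}-\q{})\,\dd\lebesgue^K=0$ to kill the $\log A$ term by hand. Both routes are equivalent and rely on $\lebesgue^K(\XXX)=1$; the paper's version is marginally slicker, yours is marginally more transparent about where the mass constraint enters.
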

  \begin{proof}
    Define $\pp{\vt x} \defas A^{-1} \p{A^{-\frac{1}{K}} \vt x}$ 
        and
    $\qq{\vt x} \defas A^{-1} \q{A^{-\frac{1}{K}} \vt x}$ for $\vt x \in [0,A^{\frac 1K}]^K$. We have $|\pp{\vt x} - \qq{\vt x}| \le A^{-1}\eps$, $0 \le \pp{\vt x} \le 1$ as well as
    \begin{align}
      & |\dent{\p{}} - \dent{\q{}}| \eqnl
      &\qquad = |\dent{\pp{}} - \dent{\qq{}}| \\
      &\qquad = \left| \int \pp{} \log \pp{} - \qq{} \log\qq{} \,\dd\lebesgue^K \right| \\
      &\qquad \le \int \left| \pp{} \log \pp{} - \qq{} \log\qq{} \right| \,\dd\lebesgue^K \\
      &\qquad \le - \lebesgue^K([0,A^{\frac 1K}]^K) \cdot A^{-1} \eps \log(A^{-1}\eps) \label{eq:discretization_apply_diffbound} \\
            &\qquad= \eps \log\frac{A}{\eps} ,
    \end{align}
    where \cref{lem:diff_info_bound} was applied in \cref{eq:discretization_apply_diffbound}.
  \end{proof}

  We can now finish the proof of \cref{thm:lipschitz} by showing \cref{eq:entropy_estimate:bias2}.
  \begin{lemma}
    \label{lem:discretizaion_bound}
    If $M \ge \frac{1}{\alpha \eta(K,L)}$ we have
    \begin{align}
      \left| \dent[normal]{\orvt x} - \dent{\rvt x} \right| \le \frac{LK}{2M} \log(M \eta(K,L)) .
    \end{align}
  \end{lemma}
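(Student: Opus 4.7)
The plan is to combine the three preceding lemmas in a direct chain: \cref{lem:difference_bound} controls the pointwise distance between $\p{}$ and $\q{}$, \cref{lem:upper_bound} supplies a uniform upper bound on $\p{}$, and \cref{lem:discretizaion_bound1} converts these into a bound on the difference of differential entropies. Both $\p{}$ and $\q{}$ are supported on $\XXX = [0,1]^K$ and have finite differential entropy (the latter is finite because $\orvt x$ has a bounded density on a bounded support), so \cref{lem:discretizaion_bound1} is applicable once its hypotheses are verified.

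Concretely, I would set $\eps \defas \frac{LK}{2M}$ and $A \defas \bigl(\tfrac{L^K (K+1)!}{2^K}\bigr)^{1/(K+1)}$, so that \cref{lem:difference_bound} gives $|\p{\vt x}-\q{\vt x}| \le \eps$ on $\XXX$ and \cref{lem:upper_bound} gives $0 \le \p{\vt x} \le A$. The only nontrivial step is to verify the quantitative compatibility between the definition of $\eta(K,L)$ and the hypothesis of \cref{lem:discretizaion_bound1}, namely $\frac{\eps}{A} \le \alpha$. A short calculation shows
\begin{align}
\frac{\eps}{A}
= \frac{LK}{2M}\left(\frac{2^K}{L^K(K+1)!}\right)^{\frac{1}{K+1}}
= \frac{K}{M}\left(\frac{L}{2(K+1)!}\right)^{\frac{1}{K+1}}
= \frac{1}{M\,\eta(K,L)} ,
\end{align}
so the hypothesis $M \ge \frac{1}{\alpha \eta(K,L)}$ is exactly equivalent to $\frac{\eps}{A} \le \alpha$, and moreover yields $\frac{A}{\eps} = M\eta(K,L)$.

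With these identifications, \cref{lem:discretizaion_bound1} gives
\begin{align}
|\dent[normal]{\orvt x} - \dent{\rvt x}|
= |\dent{\q{}} - \dent{\p{}}|
\le \eps \log \frac{A}{\eps}
= \frac{LK}{2M}\log\!\bigl(M\eta(K,L)\bigr),
\end{align}
which is the desired conclusion.

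The only potential obstacle is the algebraic bookkeeping to confirm $\frac{A}{\eps} = M\eta(K,L)$; everything else is a direct invocation of already-proven lemmas. I would therefore present the proof as: state $\eps$ and $A$, cite \cref{lem:difference_bound,lem:upper_bound} to justify the two hypotheses of \cref{lem:discretizaion_bound1}, carry out the one-line algebraic simplification above to translate the threshold on $M$ into $\tfrac{\eps}{A}\le\alpha$, and conclude via \cref{lem:discretizaion_bound1}.
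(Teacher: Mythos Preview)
Your proposal is correct and follows essentially the same approach as the paper: set $\eps = \frac{LK}{2M}$ and $A = \bigl(\frac{L^K(K+1)!}{2^K}\bigr)^{1/(K+1)}$, invoke \cref{lem:difference_bound} and \cref{lem:upper_bound} to verify the hypotheses of \cref{lem:discretizaion_bound1}, observe that $\tfrac{\eps}{A} = \tfrac{1}{M\eta(K,L)}$ so the condition $M \ge \tfrac{1}{\alpha\eta(K,L)}$ is precisely $\tfrac{\eps}{A}\le\alpha$, and conclude. The paper's proof is terser but identical in structure; your explicit algebraic verification of $\tfrac{A}{\eps} = M\eta(K,L)$ is a welcome addition.
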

  \begin{proof}
    By \cref{lem:difference_bound}, $|\p{\vt x} - \q{\vt x}| \le \frac{LK}{2M}$ and, by \cref{lem:upper_bound}, $\p{} \le \left(\frac{L^K (K+1)!}{2^K}\right)^{\frac{1}{K+1}}$. We can thus apply \cref{lem:discretizaion_bound1} with $\eps = \frac{LK}{2M}$ and $A = \left(\frac{L^K (K+1)!}{2^K}\right)^{\frac{1}{K+1}}$ provided that $\frac{\eps}{A} \le \alpha$, which is equivalent to $M \ge \frac{1}{\alpha \eta(K,L)}$. Inserting $\eps$ and $A$ in~\cref{eq:discretizaion_bound1} proves the result.
  \end{proof}

 \end{appendix}

\newpage

\bibliography{IEEEabrv.bib,literature}
\bibliographystyle{abbrv}

\end{document}